\newcommand*{\circled}[1]{\lower.7ex\hbox{\tikz\draw (0pt, 0pt)%
    circle (.5em) node {\makebox[1em][c]{\small #1}};}}
\theoremstyle{plain}
\newtheorem{theorem}{Theorem}[section]
\newtheorem{corollary}{Corollary}[section]
\newtheorem{lemma}{Lemma}[section]
\newtheorem{example}{Example}[section]
\newtheorem{proof}{Proof}
\theoremstyle{definition}
\newtheorem{definition}{Definition}[section]
\theoremstyle{remark}
\definecolor{mygreen}{rgb}{0,0.6,0}
\definecolor{mygray}{rgb}{0.5,0.5,0.5}
\definecolor{mymauve}{rgb}{0.58,0,0.82}
\newcolumntype{I}{!{\vrule width 1.2pt}}
\newlength\savedwidth
\newlength\savewidth
\newcommand\shline{\noalign{\global\savewidth\arrayrulewidth
                           \global\arrayrulewidth 1.2pt}%
                  \hline
                  \noalign{\global\arrayrulewidth\savewidth}}
\begin{document}
\begin{CJK*}{UTF8}{gbsn}

\title{Enhanced Fast Boolean Matching based on \\Sensitivity Signatures Pruning}

\author{\IEEEauthorblockN{
Jiaxi~Zhang$^{1,2,}$\IEEEauthorrefmark{1},
Liwei~Ni$^2$,
Shenggen~Zheng$^{2}$, 
Hao~Liu$^{1}$,
Xiangfu~Zou$^2$,
Feng~Wang$^1$
and~Guojie~Luo$^{1,2,3,\text{\Letter}}$
}
\IEEEauthorblockA{
$^1$Center for Energy-Efficient Computing and Applications, Peking University, Beijing, China}
\IEEEauthorblockA{
$^2$Peng Cheng Laboratory, Shenzhen, China
}
\IEEEauthorblockA{
$^3$Advanced Institute of Information Technology, Peking University, Hangzhou, China
}
\IEEEauthorblockA{
Email: \{\IEEEauthorrefmark{1}zhangjiaxi, $^{\text{\Letter}}$gluo\}@pku.edu.cn
}}

\maketitle

\balance

\begin{abstract}
Boolean matching is significant to digital integrated circuits design.
An exhaustive method for Boolean matching is computationally expensive even for functions with only a few variables, because the time complexity of such an algorithm for an n-variable Boolean function is $O(2^{n+1}n!)$.
Sensitivity is an important characteristic and a measure of the complexity of Boolean functions.
It has been used in analysis of the complexity of algorithms in different fields.
This measure could be regarded as a signature of Boolean functions and has great potential to help reduce the search space of Boolean matching.

In this paper, we introduce Boolean sensitivity into Boolean matching and design several sensitivity-related signatures to enhance fast Boolean matching.
First, we propose some new signatures that relate sensitivity to Boolean equivalence.
Then, we prove that these signatures are prerequisites for Boolean matching, which we can use to reduce the search space of the matching problem.
Besides, we develop a fast sensitivity calculation method to compute and compare these signatures of two Boolean functions.
Compared with the traditional cofactor and symmetric detection methods, sensitivity is a series of signatures of another dimension.
We also show that sensitivity can be easily integrated into traditional methods and distinguish the mismatched Boolean functions faster.
To the best of our knowledge, this is the first work that introduces sensitivity to Boolean matching.
The experimental results show that sensitivity-related signatures we proposed in this paper can reduce the search space to a very large extent, and perform up to 3x speedup over the state-of-the-art Boolean matching methods.

\end{abstract}


\section{Introduction}\label{sec:intro}

Boolean equivalence classification and matching are widely used in many design stages such as logic synthesis, engineering change order, verification, and hardware Trojan detection.
A key task of Boolean matching is to determine whether two Boolean functions belong to the same NPN class.
An NPN class is a set of completely Boolean functions, all of which can be obtained from each other with three types of transformations including permuting the inputs or complementing the inputs and outputs.
There are $2^{n+1}n!$ NPN transformations for an n-variable Boolean function.
An exhaustive method can determine whether two Boolean functions are equivalent by enumerating these transformations, but the running time will be unacceptable as $n$ increases.

Boolean matching is a long-term problem due to its huge computational complexity.
Many methods have been explored to solve this problem.
These methods usually take truth tables or binary decision diagrams~(BDDs) as the inputs of matching.
These works can be classified as four types~\cite{benini1997survey}, algorithms based on canonical forms, algorithms using Boolean signatures, SAT-based methods, and spectral-analysis-based methods. 
Algorithms based on canonical formwork by computing some complete and unique
canonical forms of the Boolean functions, and all Boolean functions in an equivalence class have the same canonical form.
This form can be used to check for NPN equivalence by straightforwardly testing NPN transformations.
Signatures of a Boolean function, which also called filters, are compact representations that characterize some of the properties of the function itself. 
The search space was reduced and the matching speed was improved by means of structural signatures.
Spectral-based methods usually transform Boolean function into spectral representations, where a representation can uniquely identify a function. 
SAT-based methods rely on quick SAT solvers.
These methods usually derive the SAT formulation based on the specific application of Boolean matching.
In a word, it is hard to directly test the NPN equivalent by applying NPN transformations.
The key point of Boolean matching is to find inherent properties of Boolean functions to prune and reduce the search space.

Sensitivity was first introduced~\cite{cook1982bounds} as a simple combinatorial complexity measure for Boolean functions.
It is nowadays a well-known invariant of Boolean functions that occurs in many different fields, ranging from satisfied problem~\cite{dubois1997general, kirousis1998approximating, impagliazzo2001complexity} to quantum computational complexity~\cite{buhrman2002complexity}.
The sensitivity set of a Boolean function at a particular input is the set of input positions where changing that one bit then the output will be changed.
The sensitivity of the Boolean function at a particular input is then the cardinality of the sensitivity set, while the sensitivity of the function is defined as the maximum of its sensitivity over all possible inputs.

Sensitivity can be regarded as a series of signatures of the Boolean functions.
This series of signatures also includes block sensitivity~\cite{nisan1991crew}, average sensitivity, and average block sensitivity.
Amano~\cite{amano2017enumeration} gave some statistical data on sensitivity and NPN equivalence classes.
In fact, Boolean functions with different sensitivity properties could not be NPN equivalent~(see detailed proofs in Section~\ref{sec:prove}).
This feature gives sensitivity great potential to help reduce the search space of Boolean matching.
Previous signatures are mainly based on cofactor and symmetries of Boolean functions, they only explore more about symmetric variables of Boolean functions.
Sensitivity contains more structured information between variables~(see details in Section~\ref{sec:compare}).
In this paper, we will consider the sensitivity of Boolean functions and propose several techniques to enhance fast Boolean matching based on series of sensitivity signatures.
Our contributions in this paper are fourfold:
\begin{itemize}
    \item To the best of our knowledge, this is the first work that introduces sensitivity into Boolean matching. We propose some new signatures that relate sensitivity to Boolean equivalence.
    \item We prove that these signatures are prerequisites for Boolean matching, which we can use to reduce the search space of the matching problem.
    Experimental results show that sensitivity signatures have a high pruning effect.
    \item We develop a fast sensitivity calculation method to compute and compare sensitivity-related signatures of two Boolean functions. This method can quickly determine whether the sensitivities of two Boolean functions are equal.
    \item We show that sensitivity can be easily integrated into traditional methods and distinguish the mismatched Boolean functions faster. Experimental results show that the overall method can perform up to 3x speedup over the state-of-the-art Boolean matching methods.
\end{itemize}

The rest of the paper is organized as follows.
Section~\ref{sec:background} summarizes the background of Boolean matching and Boolean sensitivity.
Section~\ref{sec:prove} provides some definitions of sensitivity- related signatures, and some theorems and their proofs used in Boolean matching.
Section~\ref{sec:method} explains how we can use series of sensitivity signatures to enhance the fast Boolean matching method.
Implementation are evaluated with experimental results in Section~\ref{sec:evaluation}.
Finally, Section~\ref{sec:related} introduces some related works, and Section~\ref{sec:conclusion} concludes the paper.

\section{Preliminaries}\label{sec:background}

\subsection{Notations and Basic Definitions}

An $n$-variable Boolean function $f(x)$ takes the form $f: \{0,1\}^n \rightarrow \{0,1\}$, where $\{0,1\}$ is the Boolean domain and $n$ is the arity of $f$.
We call $x \in \{0,1\}^n$ a \emph{word} of arity $n$,
and We denote the $i$-th bit in the word as $x_i$.
Thus, $x=(x_1,x_2,\cdots,x_n)$ is also a Boolean string of length $n$.
In this paper, we use $f$ and $g$ to denote Boolean functions on $n$ variables.
Unless otherwise stated, $x, y, z$ denote words of arity $n$. 

\begin{figure}[tbp]
\centering
\begin{subfigure}[b]{0.45\linewidth}
\centering
\includegraphics[width=0.90\linewidth]{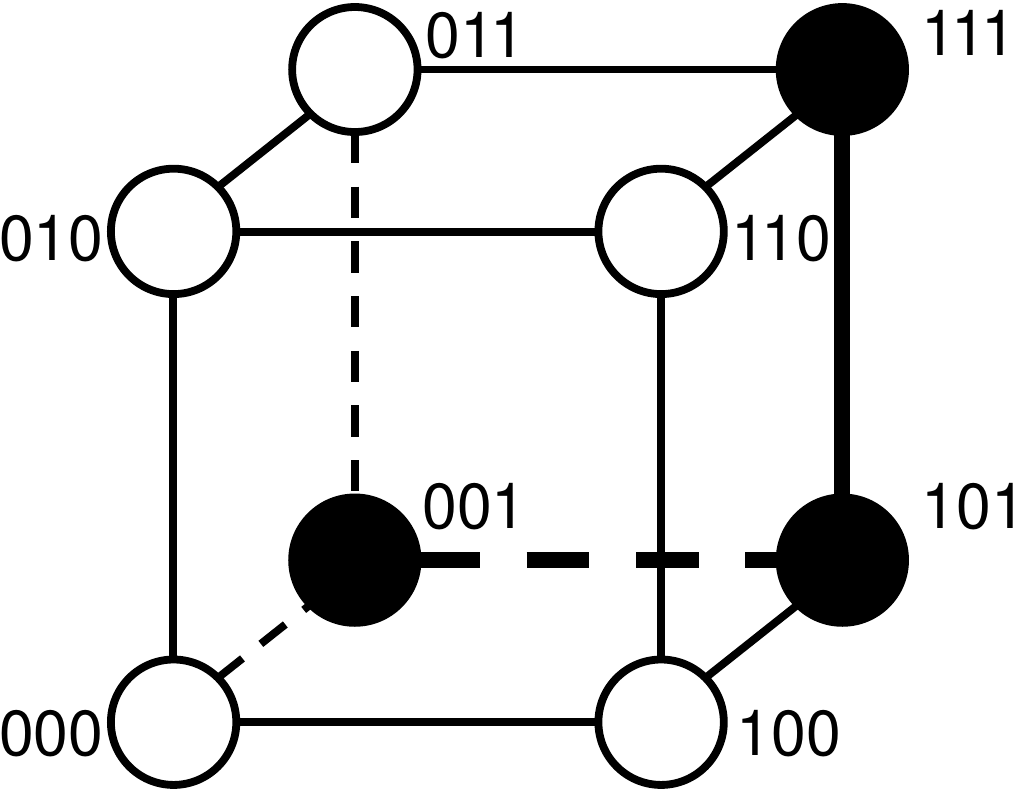}
\caption{Hypercube view}
\label{fig:hypercube1}
\end{subfigure}
\begin{subfigure}[b]{0.45\linewidth}
\centering
\includegraphics[width=0.90\linewidth]{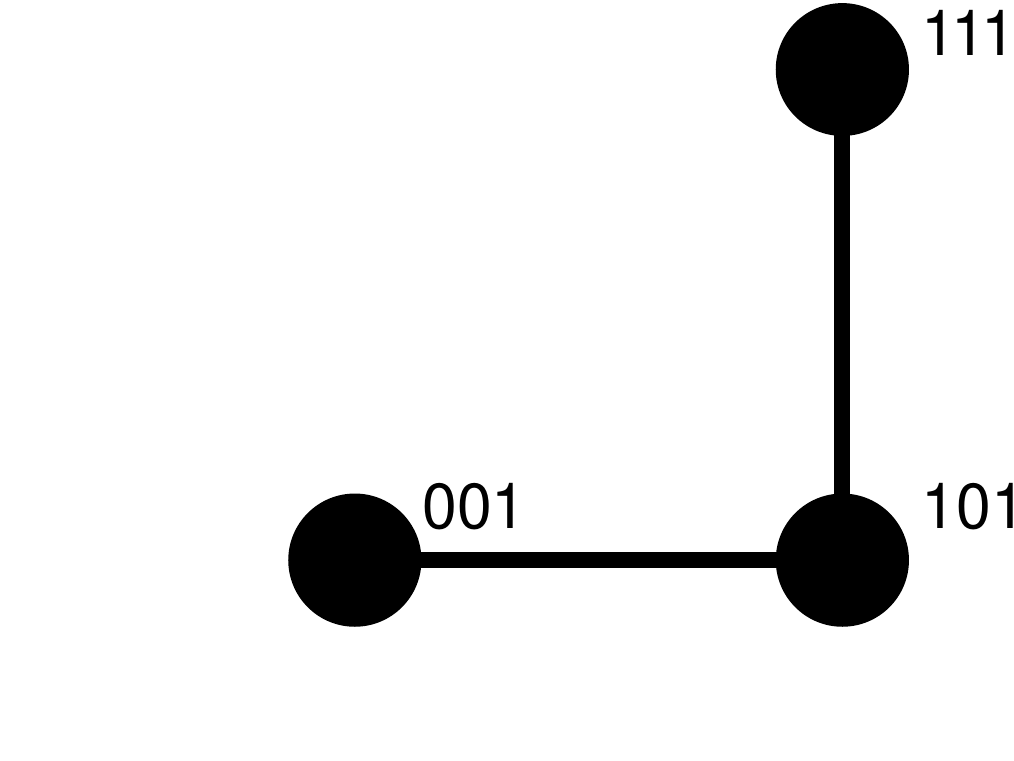}
\caption{Induced subgraph}
\label{fig:hypercube2}
\end{subfigure}
\caption{Graph representation for $f=x_1x_2x_3+\overline{x_2}x_3$.}
\label{fig:hypercube}
\vspace{-1em}
\end{figure}

Boolean function $f$ is often represented by its truth table $T(f)$, a string of $2^n$ bits.
The $i$-th bit of $T(f)$ is equal to $f((i)_2)$, where $(i)_2$ is the little-endian binary code of integer $i$.
From $T(f)$, we can express $f$ as a sum of 1-minterm.
We denote $X$ as the number of terms in truth table and $x^{(i)}$ as the $i$-th term. 

A Boolean function can also be represented by a subgraph of a hypercube.
The hypercube $Q_n$ is a graph of order $2^n$ whose vertices are represented by $n$-tuples $(x_1,x_2,...,x_n)$, where $x_i\in \{0,1\}$, and whose edges connect vertices which differ in exactly one term.
$f$ can be represented as the induced subgraph of $Q_n$ from the 1-minterm nodes.
Figure~\ref{fig:hypercube} gives an example.
Figure~\ref{fig:hypercube2} is induced subgraph from $Q_3$ composed of bold lines and $\bullet$ represent $f=x_1x_2x_3+\overline{x_2}x_3$.

\subsection{Sensitivity of Boolean Functions}
In this subsection, we will give several definitions about sensitivity, which will be used later in our Boolean matching method.

\begin{definition}
The \textit{sensitivity} of $f$ on the word $x$, which is also called \textit{local sensitivity}, is the number of input positions, changing any bit in which also changes the output: $s(f,x) = |{i:f(x) \neq f(x^i)}|$.
\end{definition}

If  $f(x)\neq f(x^i)$, we say $f$ and input $x$ is sensitive on index $i$.
We can further define the \emph{sensitivity} of $f$ as $s(f) = \max\{s(f,x): x \in \{0,1\}^n\}$, the $\emph{0}$-\emph{sensitivity} of $f$ as $s^0(f) = \max\{s(f,x): x \in \{0,1\}^n, f(x) = 0\}$ and the $\emph{1}$-\emph{sensitivity} of $f$ as $s^1(f) = \max\{s(f,x): x \in \{0,1\}^n, f(x) = 1\}$.

By the above definition, obviously we can get that for any Boolean function $f$ on $n$ variables, $s(f)$ is not greater than $n$. 
Also, it is trivially observed that this upper bound is tight, i.e., there are functions with sensitivity $n$.

\begin{example}
Let $f=x_1x_2x_3$, a 3-variable AND function.
For a word $x=\emph{000}$, $f(x)$ will not change no matter any bit changes, so $s(f,\mathit{000}) = 0$.
Furthermore, $s(f) = \max\{s(f,x)\} = s(f, \emph{111}) = 3$.
We also have $s^0(f) = s(f, \emph{101}) = 1$ and $s^1(f) = s(f, \emph{111}) = 3$.
\end{example}
\begin{definition}
We can define $average\;sensitivity$ $\widehat{s(f)}$ as
\begin{equation}
    \widehat{s(f)} = \frac{1}{2^n}\sum_{x}s(f,x).
\nonumber
\end{equation}
\end{definition}
$\widehat{s(f)}$ can be regarded as comprehensive measures of the sensitivity of Boolean functions.

\begin{figure}[tbp]
\centering
\begin{subfigure}[b]{0.20\textwidth}
\centering
\includegraphics[width=0.9\linewidth]{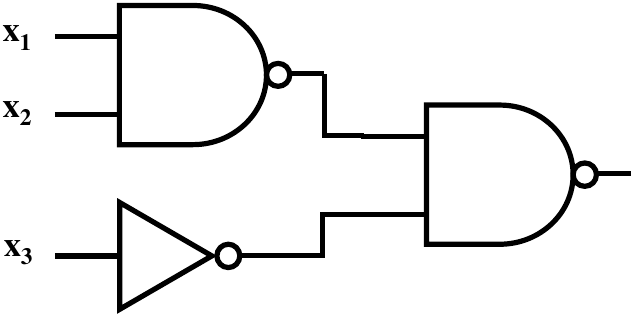}
\caption{$f = x_1x_2 + x_3$}
\label{fig:npneq1}
\end{subfigure}
\begin{subfigure}[b]{0.25\textwidth}
\centering
\includegraphics[width=0.9\linewidth]{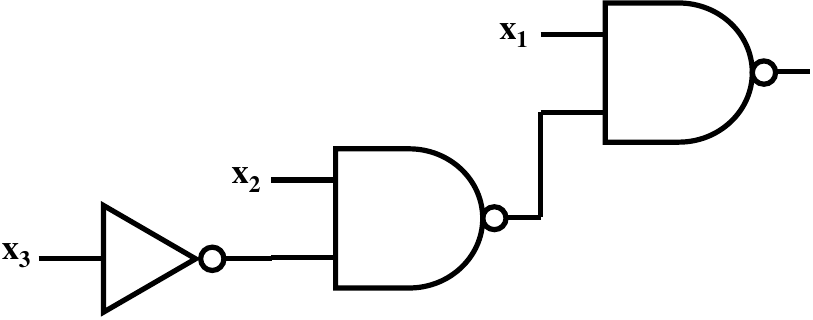}
\caption{$g = \overline{x_1} + x_2\overline{x_3}$}
\label{fig:npneq2}
\end{subfigure}
\caption{Two NPN-equivalent functions $f$ and $g$.}
\label{fig:npnexample}
\vspace{-1em}
\end{figure}

\subsection{Boolean Matching}

An NP transformation of a Boolean function is composed of variables negations and permutations.
Negation replaces a variable by its complement~(e.g., $x_1 \to \overline{x_1}$), which is also called \textit{flip}.
Permutation changes the order of variables~(e.g., $x_1x_2 \to x_2x_1$), which is also called \textit{swap}.
For an $n$-variable Boolean function, there are $2^n$ ways of transformations by flipping the inputs and $n!$ ways of transformations by swapping the variables.
Besides, there are two polarities of the function derived by complementing its output.
In total, there are $2^{n+1}n!$ transformations of the function by swapping its inputs and flipping its inputs and output.

\begin{definition}
Consider the set of all Boolean functions derived by the $2^{n+1}n!$ transformations of a Boolean function $f$, as described above. These functions constitute the \emph{NPN class} of function $f$. 
The \emph{NPN canonical form} of function $f$ is one function belonging to its NPN class, also called the representative of this class.
\end{definition}

The number of NPN classes is much smaller than the number of Boolean functions. 
For example, there are $2^{16}$ Boolean functions of 4 variables, and these functions can be split into 222 NPN classes.

\begin{definition}
Two Boolean functions $f$ and $g$ are \textit{NPN-equivalent}, $f \cong g$, if and only if there exists an NP transformation that satisfied $f(\pi((\neg)x_1,(\neg)x_2,\cdots,(\neg)x_n))=(\neg)g(x)$, where $\pi$ is a permutation and $(\neg)$ is an optional negation.
\end{definition} 
For simplicity, we denote $(\neg)x=(\neg)x_1(\neg)x_2\cdots(\neg)x_n$ in this paper.

\begin{example}
An example of NPN-equivalent functions is given in Fig~\ref{fig:npnexample}.
In this example, $f(x_1,x_2,x_3) = x_1x_2 + x_3$ and $g(x_1,x_2,x_3) = \overline{x_1} + x_2\overline{x_3}$ are
NPN-equivalent, because $f(\overline{x_3}, x_2, \overline{x_1})=g(x_1,x_2,x_3).$
\end{example}

If two Boolean functions are \emph{NPN equivalent}, one of them can be obtained from the other by swapping and flipping the inputs and the output.
The key task of Boolean matching is to determine whether two Boolean functions are \emph{NPN equivalent}.

\begin{figure*}[tbp]
\setlength{\belowcaptionskip}{0pt}
\centering
\begin{subfigure}[b]{0.19\textwidth}
\centering
\includegraphics[scale=0.33]{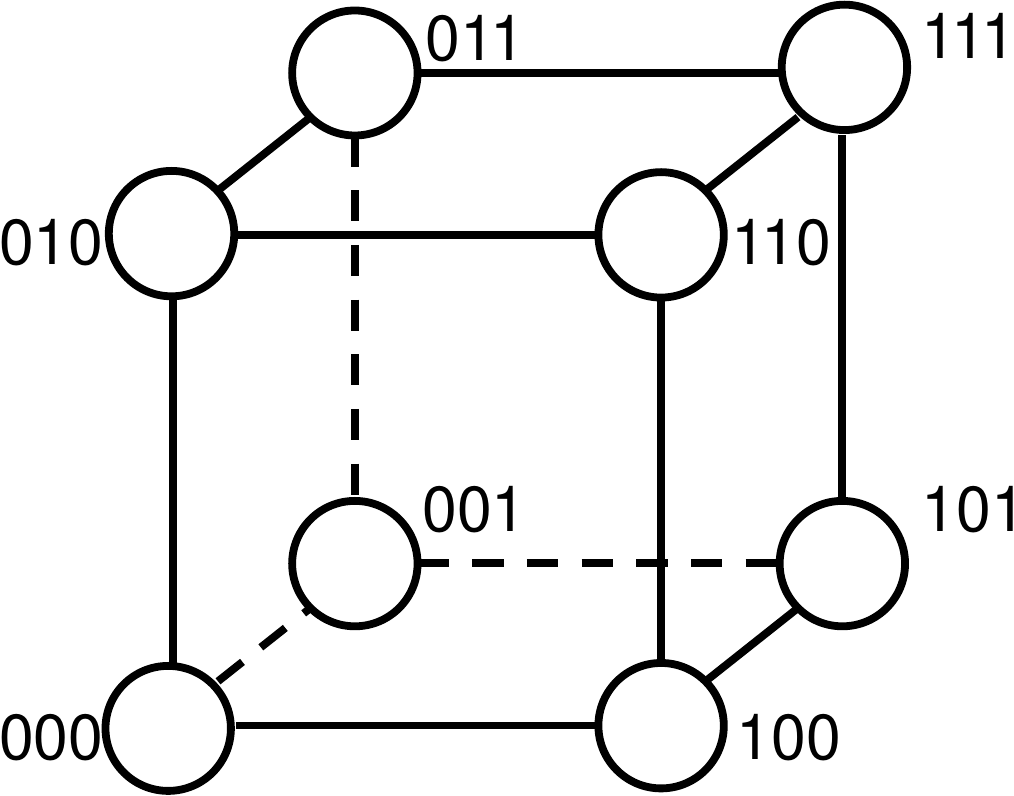}
\caption{\tiny{$s(f)=0, s^1(f)=0$\\$\widehat{s(f)}=0$\\$OSV(f)=\{\}$ \\ $OSV^1(f)=\{\}$}}
\label{fig:sensitivity1}
\end{subfigure}
\vspace{0.5em}
\begin{subfigure}[b]{0.19\textwidth}
\centering
\includegraphics[scale=0.33]{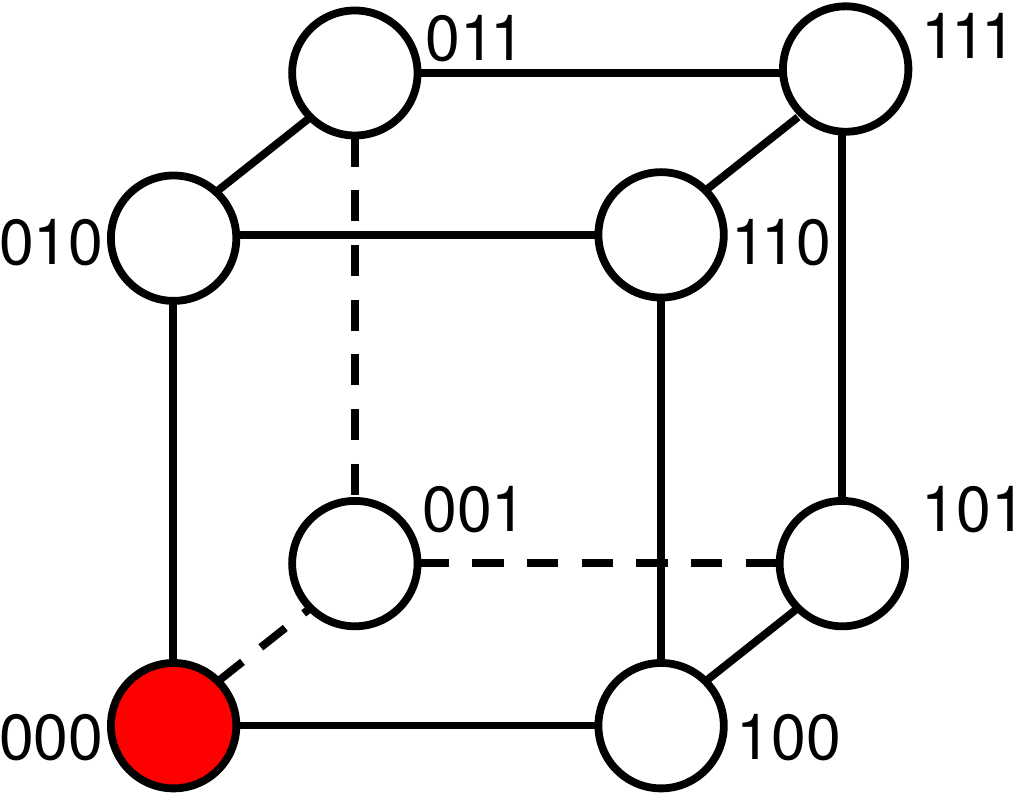}
\caption{\tiny{$s(f)=3, s^1(f)=3$\\$\widehat{s(f)}=0.75$\\$OSV(f)=\{3,1,1,1\}$ \\ $OSV^1(f)=\{3\}$}}
\label{fig:sensitivity2}
\end{subfigure}
\begin{subfigure}[b]{0.19\textwidth}
\centering
\includegraphics[scale=0.33]{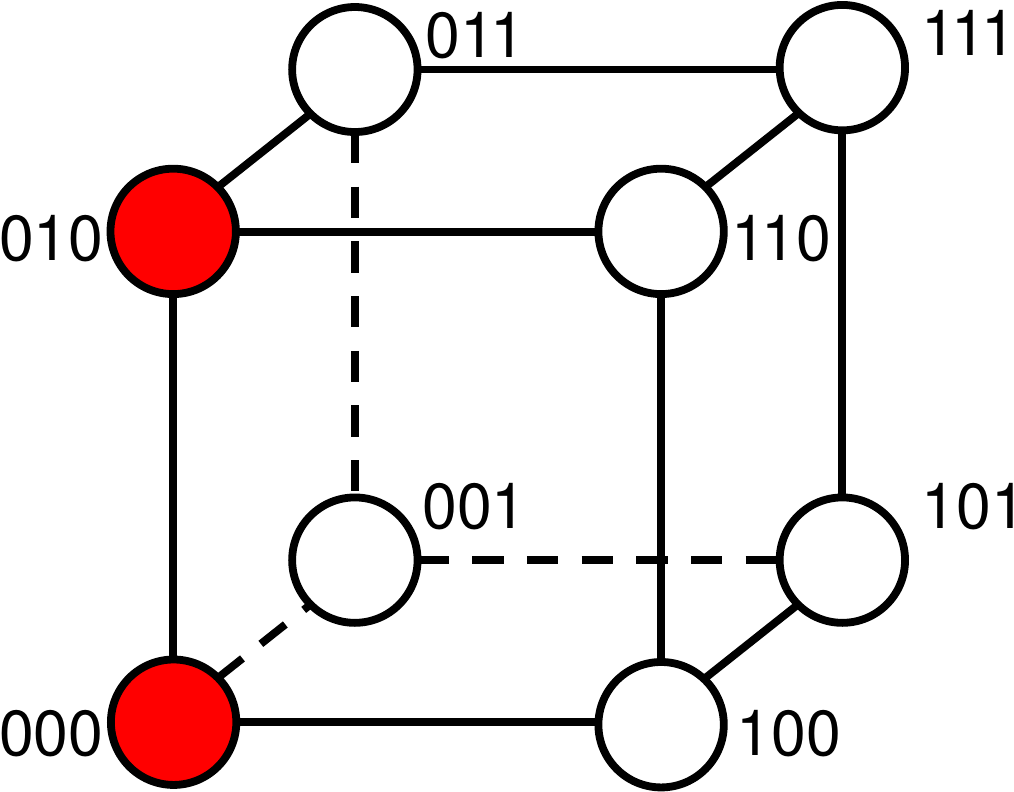}
\caption{\tiny{$s(f)=2, s^1(f)=2$\\$\widehat{s(f)}=1$\\$OSV(f)=\{2,2,1,1,1,1\}$ \\ $OSV^1(f)=\{2,2\}$}}
\label{fig:sensitivity3}
\end{subfigure}
\begin{subfigure}[b]{0.19\textwidth}
\centering
\includegraphics[scale=0.33]{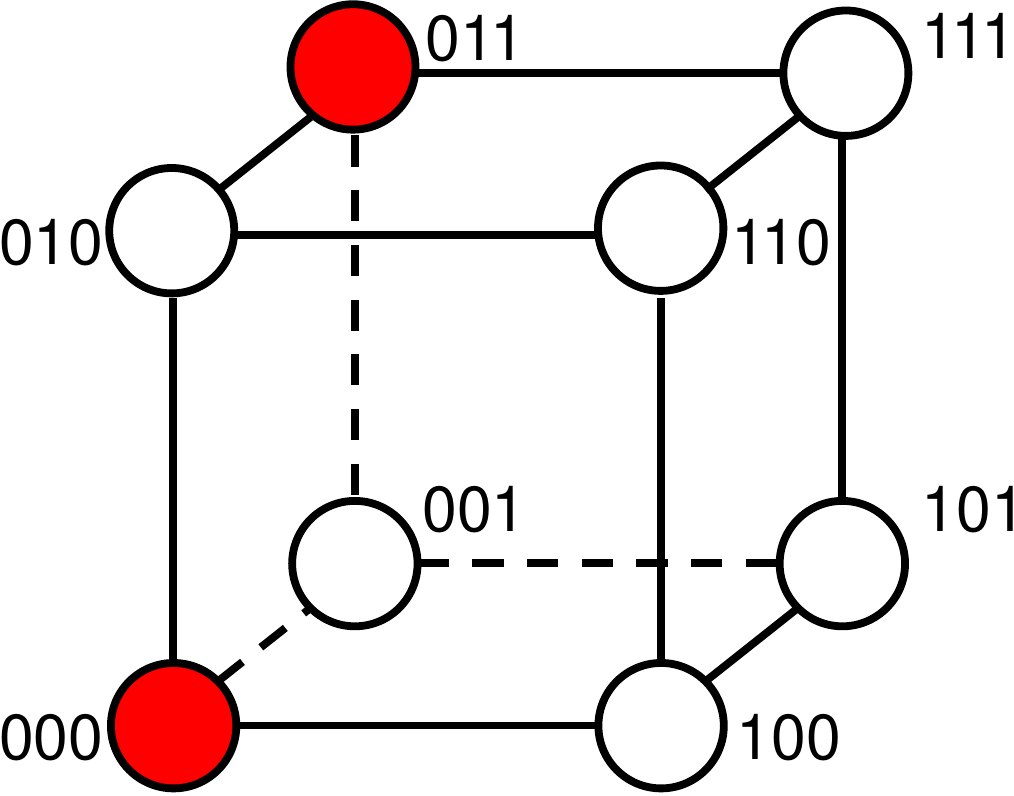}
\caption{\tiny{$s(f)=3, s^1(f)=3$\\$\widehat{s(f)}=1.5$\\$OSV(f)=\{3,3,2,2,1,1\}$ \\ $OSV^1(f)=\{3,3\}$}}
\label{fig:sensitivity4}
\end{subfigure}
\begin{subfigure}[b]{0.19\textwidth}
\centering
\includegraphics[scale=0.33]{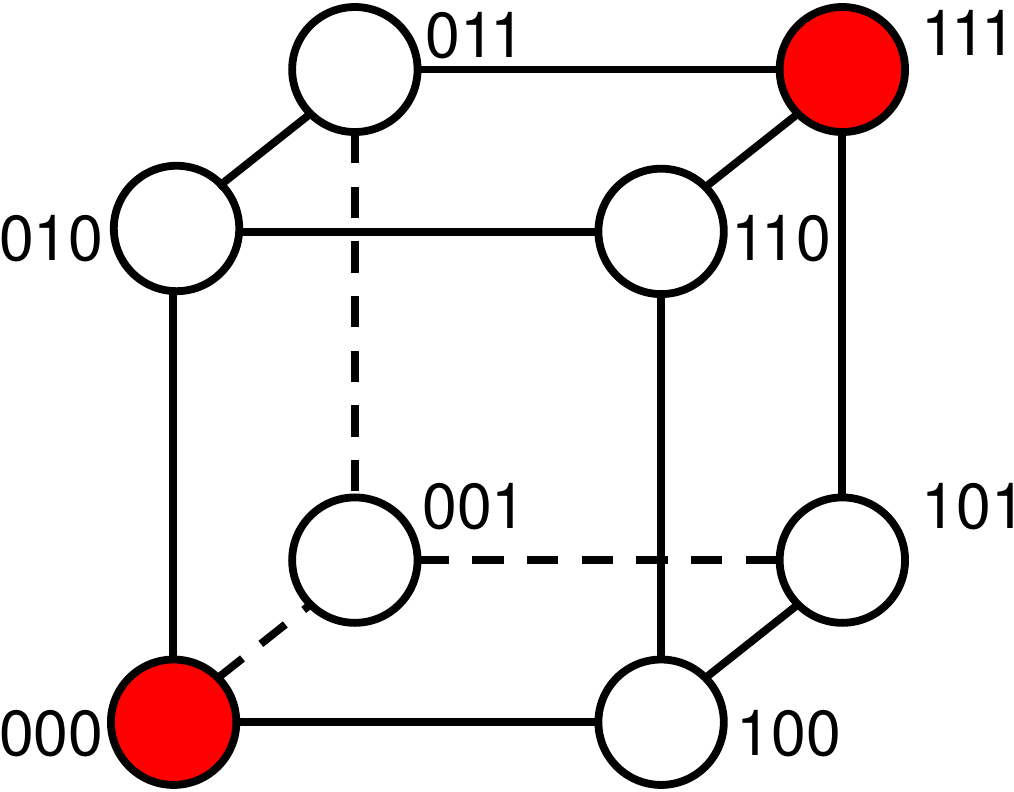}
\caption{\tiny{$s(f)=3, s^1(f)=3$\\$\widehat{s(f)}=1.5$\\$OSV(f)=\{3,3,1,1,1,1,1,1\}$ \\ $OSV^1(f)=\{3,3\}$}}
\label{fig:sensitivity5}
\end{subfigure}
\vspace{0.5em}
\begin{subfigure}[b]{0.19\textwidth}
\centering
\includegraphics[scale=0.33]{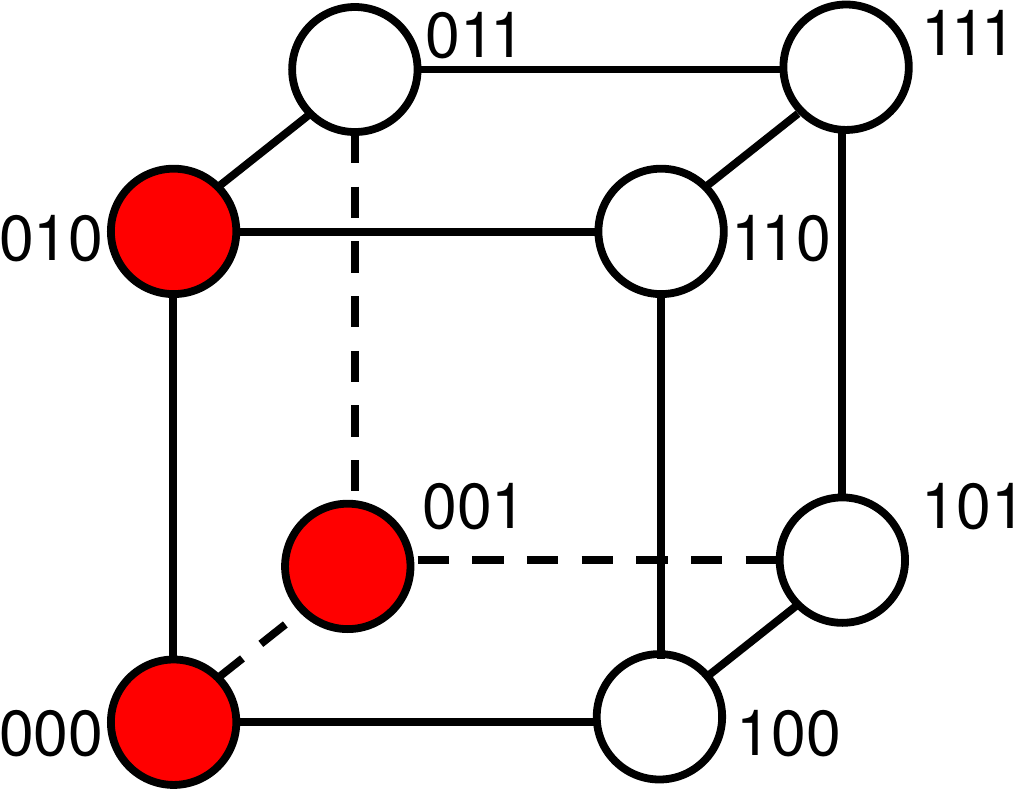}
\caption{\tiny{$s(f)=2, s^1(f)=2$\\$\widehat{s(f)}=1.125$\\$OSV(f)=\{2,2,2,1,1,1\}$ \\ $OSV^1(f)=\{2,2,1\}$}}
\label{fig:sensitivity6}
\end{subfigure}
\begin{subfigure}[b]{0.19\textwidth}
\centering
\includegraphics[scale=0.33]{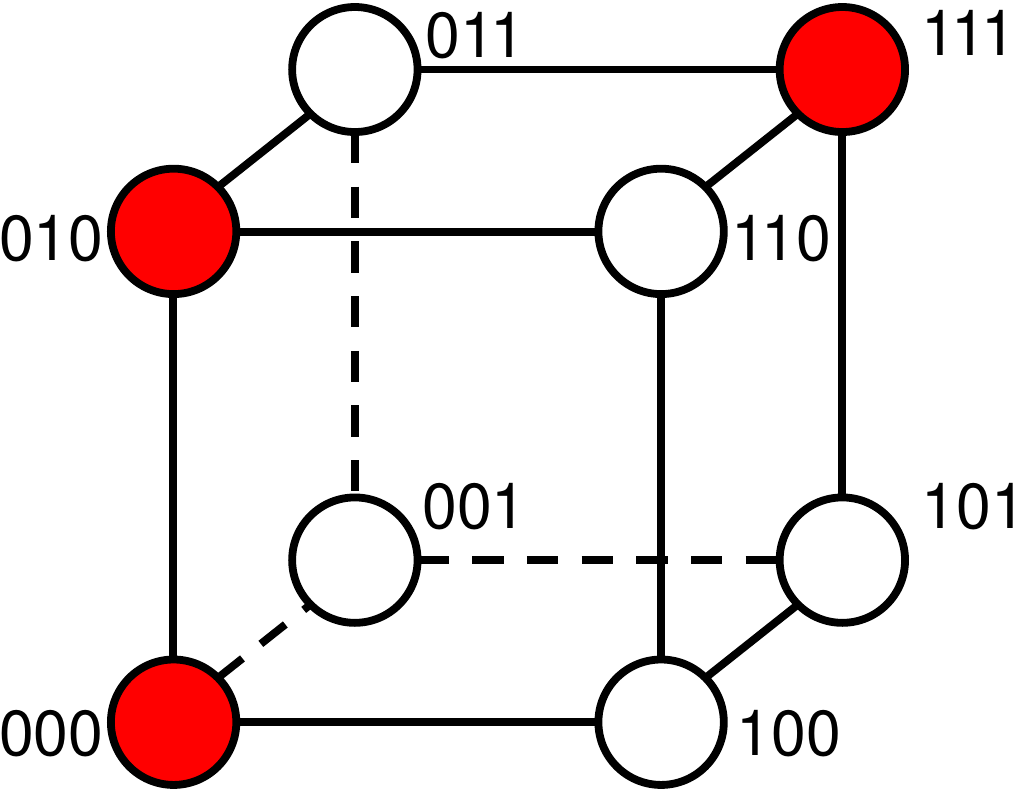}
\caption{\tiny{$s(f)=3, s^1(f)=3$\\$\widehat{s(f)}=1.75$\\$OSV(f)=\{3,2,2,\linebreak[0]2,2,\linebreak[0]1,1,1\}$ \\ $OSV^1(f)=\{3,2,2\}$}}
\label{fig:sensitivity7}
\end{subfigure}
\begin{subfigure}[b]{0.19\textwidth}
\centering
\includegraphics[scale=0.33]{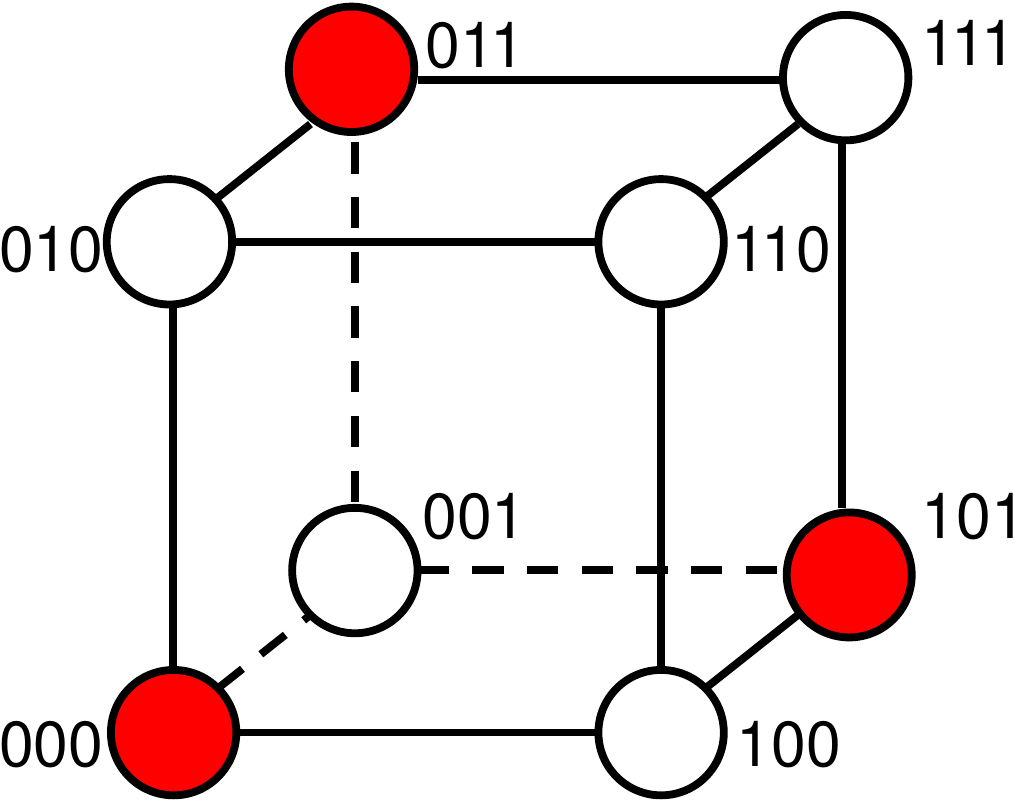}
\caption{\tiny{$s(f)=3, s^1(f)=3$\\$\widehat{s(f)}=2.25$\\ $OSV(f)=\{3,3,3,3,2,\linebreak[0]2,2\}$ \\ $OSV^1(f)=\{3,3,3\}$}}
\label{fig:sensitivity8}
\end{subfigure}
\begin{subfigure}[b]{0.19\textwidth}
\centering
\includegraphics[scale=0.33]{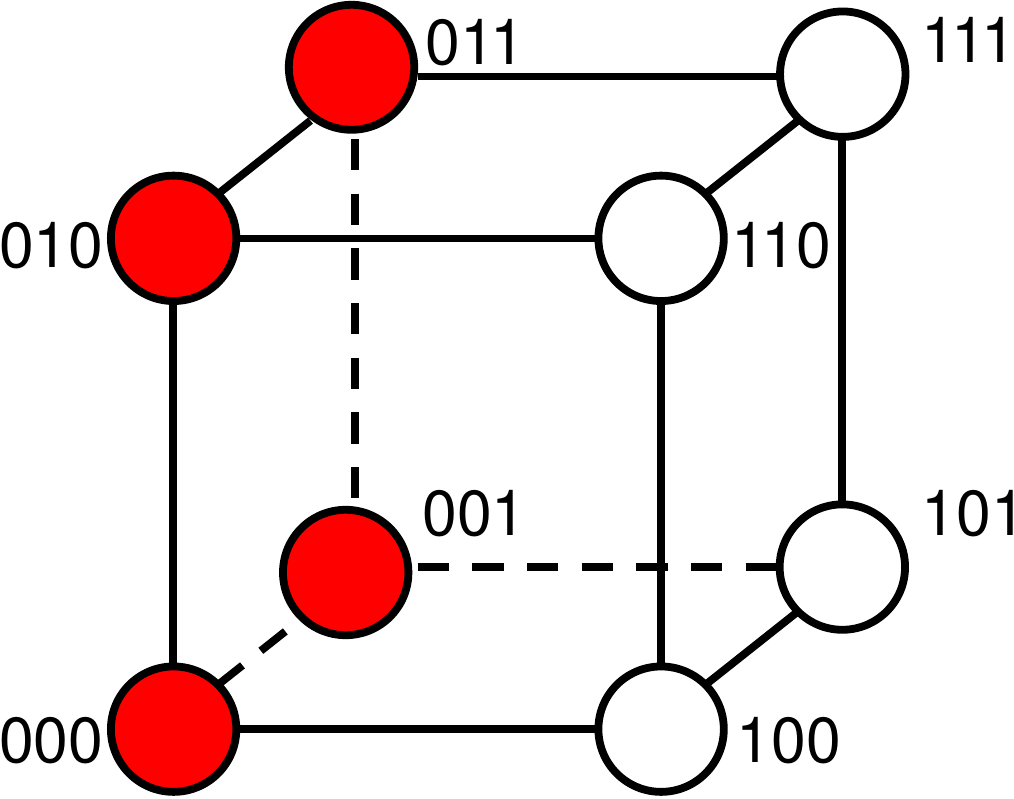}
\caption{\tiny{$s(f)=1, s^1(f)=1$\\$\widehat{s(f)}=1$\\$OSV(f)=\{1,1,1,1,1,1,\linebreak[0]1,1\}$ \\ $OSV^1(f)=\{1,1,1,1\}$}}
\label{fig:sensitivity9}
\end{subfigure}
\begin{subfigure}[b]{0.19\textwidth}
\centering
\includegraphics[scale=0.33]{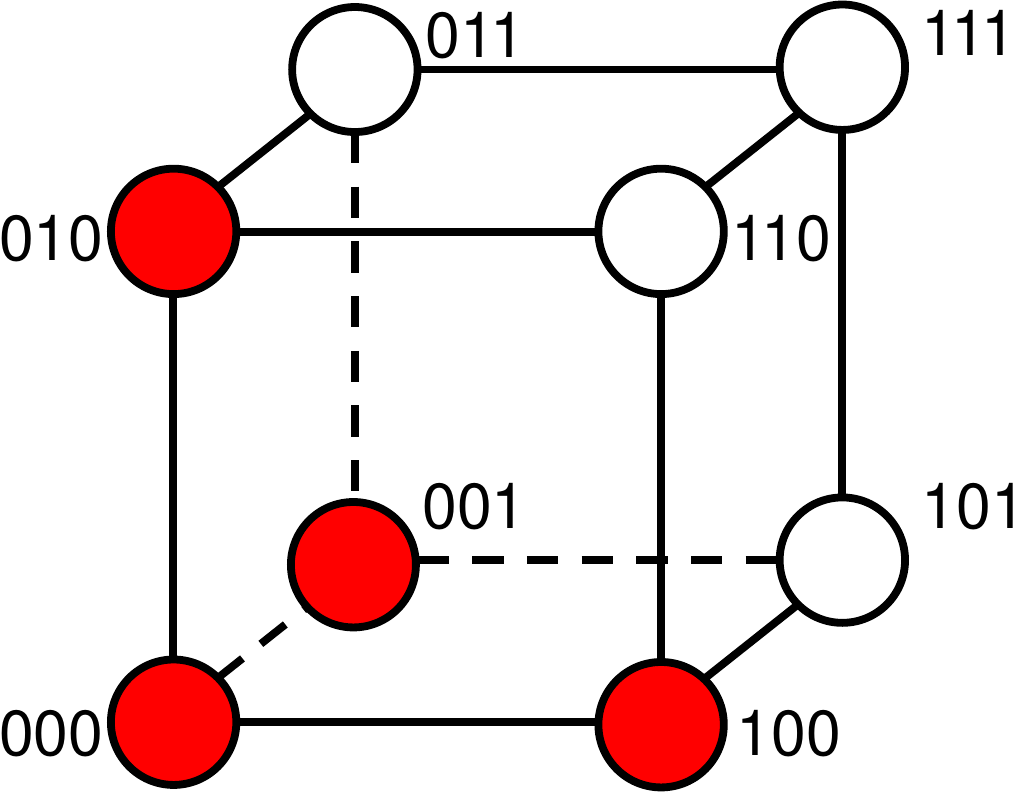}
\caption{\tiny{$s(f)=2, s^1(f)=2$\\$\widehat{s(f)}=1.5$\\$OSV(f)=\{2,2,2,2,2,2\}$\\$OSV^1(f)=\{2,2,2\}$}}
\label{fig:sensitivity10}
\end{subfigure}
\begin{subfigure}[b]{0.19\textwidth}
\centering
\includegraphics[scale=0.33]{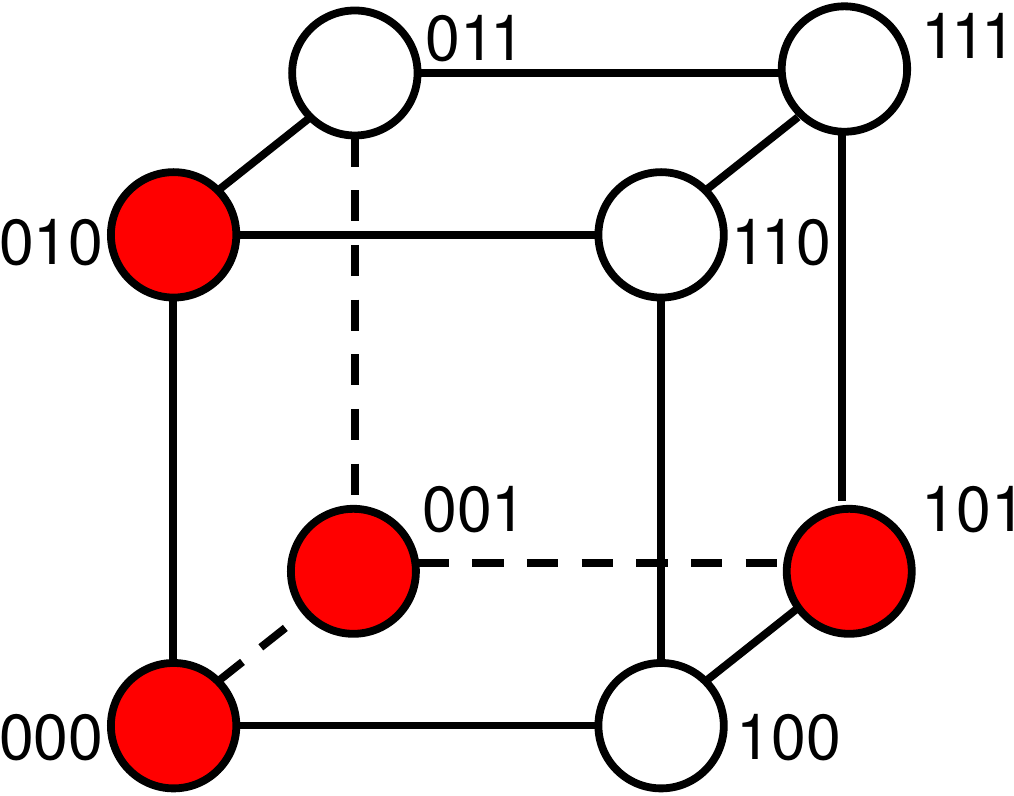}
\caption{\tiny{$s(f)=2, s^1(f)=2$\\$\widehat{s(f)}=1.5$\\$OSV(f)=\{2,2,2,2,1,1,\linebreak[0]1,1\}$\\$OSV^1(f)=\{2,2,1,1\}$}}
\label{fig:sensitivity11}
\end{subfigure}
\begin{subfigure}[b]{0.19\textwidth}
\centering
\includegraphics[scale=0.33]{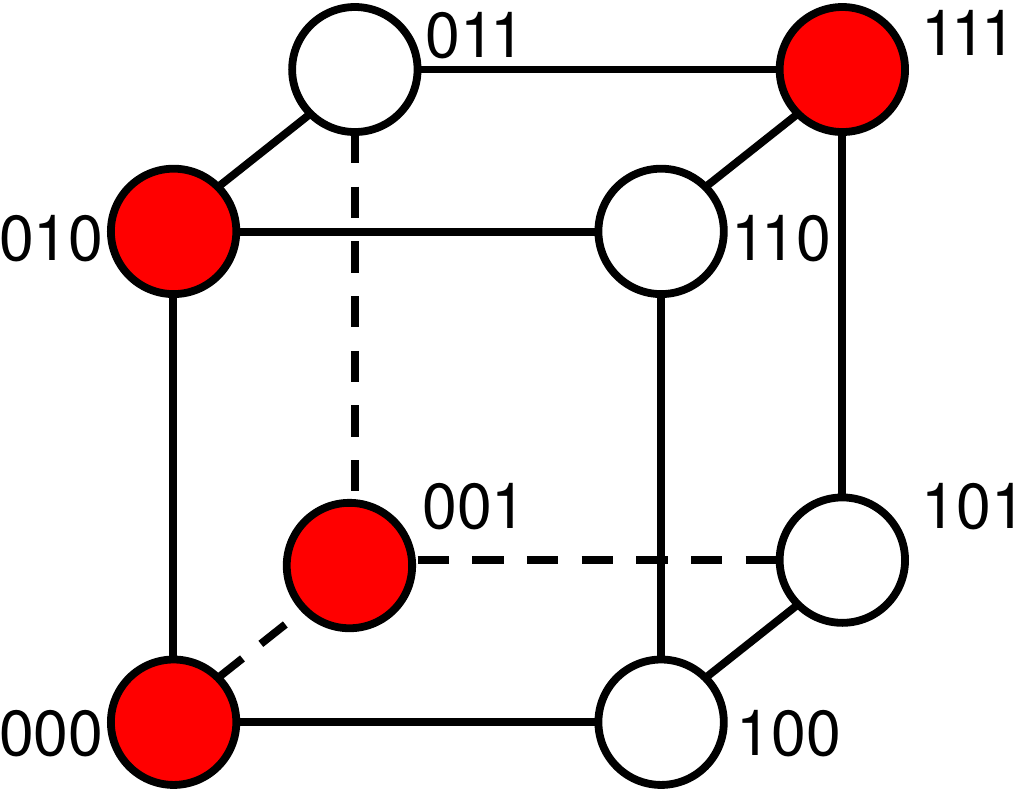}
\caption{\tiny{$s(f)=3, s^1(f)=3$\\$\widehat{s(f)}=2$\\$OSV(f)=\{3,3,2,2,2,2,\linebreak[0]1,1\}$ \\ $OSV^1(f)=\{3,2,2,1\}$}}
\label{fig:sensitivity12}
\end{subfigure}
\begin{subfigure}[b]{0.19\textwidth}
\centering
\includegraphics[scale=0.33]{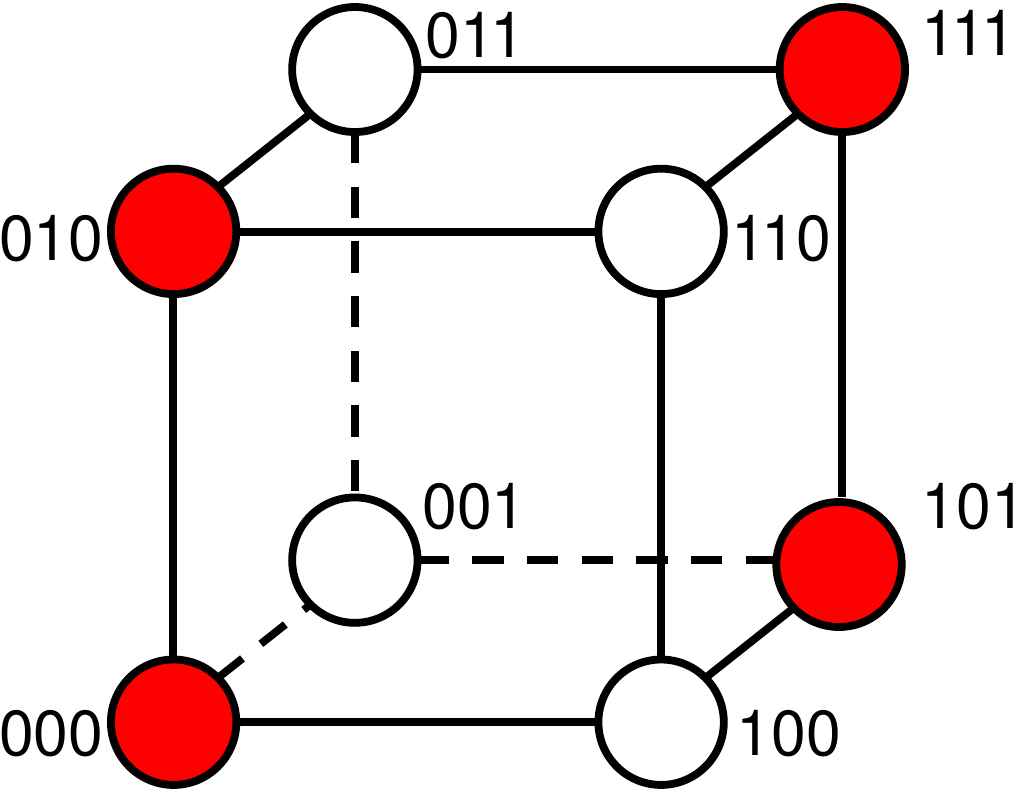}
\caption{\tiny{$s(f)=2, s^1(f)=2$\\$\widehat{s(f)}=2$\\$OSV(f)=\{2,2,2,2,2,2,\linebreak[0]2,2\}$ \\ $OSV^1(f)=\{2,2,2,2\}$}}
\label{fig:sensitivity13}
\end{subfigure}
\begin{subfigure}[b]{0.19\textwidth}
\centering
\includegraphics[scale=0.33]{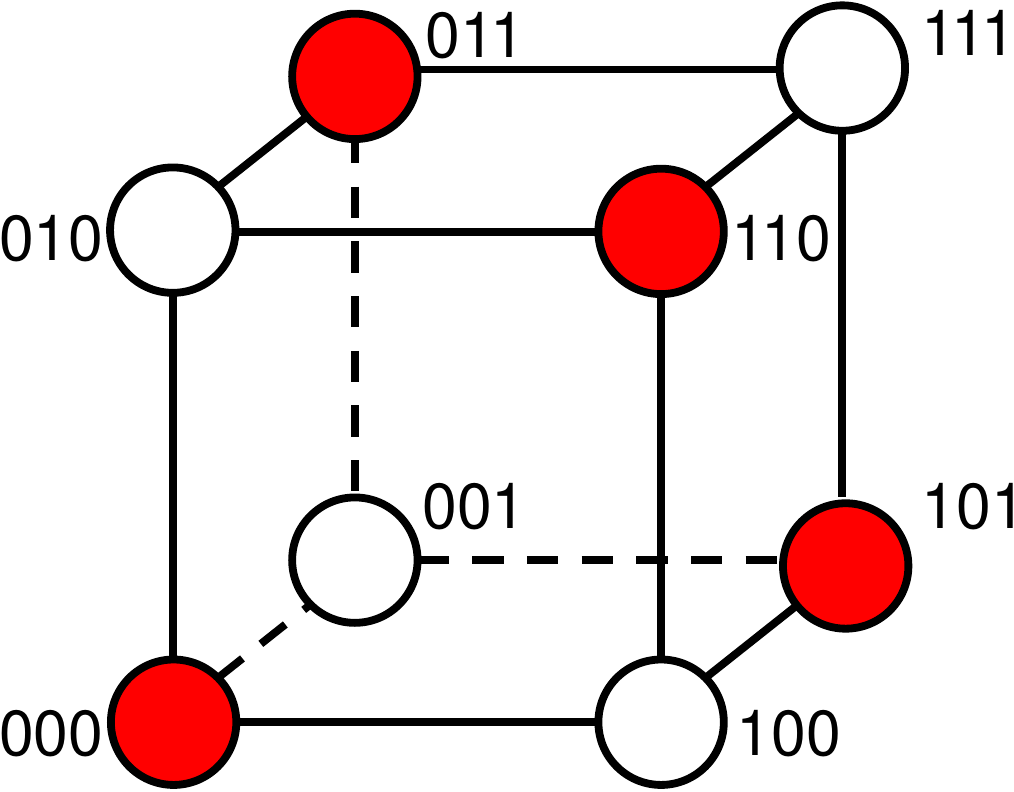}
\caption{\tiny{$s(f)=3, s^1(f)=3$\\$\widehat{s(f)}=3$\\$OSV(f)=\{3,3,3,3,3,3,\linebreak[0]3,3\}$ \\ $OSV^1(f)=\{3,3,3,3\}$}}
\label{fig:sensitivity14}
\end{subfigure}
\caption{256 3-input functions fall into 14 different NPN classes. This figure shows $s(f)$, $s^1(f)$, $OSV$ and $OSV^1$ of these 14 NPN equivalent classes . We omit \textit{0} in $OSV$ and $OSV^1$.From the subfigures, we can see that $OSV(f)$ of different NPN classes are totally different. Fig.~\ref{fig:sensitivity4} and Fig.~\ref{fig:sensitivity5}'s $OSV^1$ are the same, but their $AH(^3\!SD^1)$ are different.}
\label{fig:sensitivity_view}
\end{figure*}

\section{Sensitivity Properties}\label{sec:prove}

In this section, we provide some definitions of sensitivity-related signatures, and some theorems and their proofs about these signatures.
These theorems are the basis of the sensitivity-based pruning algorithm.
Because of the definition of sensitivity, the polarity of the output~(output negation transformation) can not be considered in the sensitivity, we can only take PN-equivalent into consideration.

\subsection{Basic Sensitivity Signatures}

\begin{lemma}
\label{Lm1} If Boolean function $f$ is  PN-equivalent  to Boolean function $g$, that is  $f(\pi((\neg)x_1,(\neg)x_2,\cdots,(\neg)x_n))=g(x)$,
then for any input $x$,   we have $$s(f,\pi((\neg)x))=s(g,x).$$ 
\end{lemma}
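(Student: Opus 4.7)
The plan is to view the PN-transformation $\phi\colon x \mapsto \pi((\neg)x)$ as a graph automorphism of the Hamming cube $Q_n$ and then transport the sensitivity count of $g$ at $x$ to the sensitivity count of $f$ at $\phi(x)$ using the defining identity $f\circ\phi = g$. The content is almost entirely bookkeeping about how a single-bit flip on the $g$-side corresponds to a single-bit flip on the $f$-side.

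First I would establish that $\phi$ is a bijection of $\{0,1\}^n$ that preserves Hamming distance. Bit-wise negation is an involution and permuting coordinates is a bijection, so $\phi$ is a bijection. For the distance-preserving claim, note that if $x$ and $x^i$ differ only in coordinate $i$, then $(\neg)x$ and $(\neg)x^i$ also differ only in coordinate $i$ (complementing a bit and flipping a bit commute on that coordinate), and finally $\pi((\neg)x)$ and $\pi((\neg)x^i)$ differ only in whatever coordinate $\pi$ sends $i$ to. Hence for each $i\in[n]$ there is a unique $j(i)\in[n]$ with $\phi(x^i) = \phi(x)^{j(i)}$, and the map $i\mapsto j(i)$ is a permutation of $[n]$.

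Next I would apply the hypothesis $f(\phi(y)) = g(y)$ at both $y = x$ and $y = x^i$, obtaining
\[ f(\phi(x)) = g(x), \qquad f(\phi(x)^{j(i)}) = f(\phi(x^i)) = g(x^i). \]
These two identities immediately yield
\[ g(x) \neq g(x^i) \;\iff\; f(\phi(x)) \neq f(\phi(x)^{j(i)}). \]
Because $i \mapsto j(i)$ is a permutation of $[n]$, summing the indicators over $i\in[n]$ gives $s(g,x) = s(f,\phi(x)) = s(f,\pi((\neg)x))$, which is the claim.

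The main (and essentially only) obstacle is the index-tracking in the first step: being precise enough about how $\pi$ and the coordinate-wise negation interact so that a single coordinate flip on $x$ produces a single coordinate flip on $\phi(x)$, and identifying the induced permutation $i \mapsto j(i)$. Once this coordinate accounting is laid out cleanly, the remainder is a direct, one-line consequence of the PN-equivalence identity and the fact that sensitivity is defined as a sum of indicators over all $n$ neighbors in $Q_n$.
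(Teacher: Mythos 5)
Your proof is correct and follows essentially the same route as the paper's: both arguments track how a single-bit flip of $x$ corresponds, through the negations and the permutation $\pi$, to a single-bit flip of $\pi((\neg)x)$, giving a bijection between the sensitive indices of $g$ at $x$ and those of $f$ at $\pi((\neg)x)$. Your write-up is in fact tighter than the paper's, which argues the index correspondence informally and leans on a worked $4$-bit example, whereas you make the induced permutation $i\mapsto j(i)$ and the indicator-sum conclusion explicit.
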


\begin{proof}
Since $f(\pi((\neg)x_1,(\neg)x_2,\cdots,(\neg)x_n))=g(x_1,x_2,\cdots, x_n)$, it is clear that 
if  $f$ and input $\pi((\neg)x)$ is sensitive on index $i$,
then $g$ and input $x$ will be sensitive on index $j$ such that $\pi(j)=i$. It is obvious to see that negation of a bit of an input can not change anything of a Boolean function's sensitivity. 

For example, let  $f(x)$ be a 4-bit Boolean function, permutation $\pi(1,2,3,4)=(4,3,2,1)$ and $f(\pi(\overline{x_1}x_2\overline{x_3}x_4)=g(x_1x_2x_3x_4)$.  Assume that $f$ and input $\pi(\overline{x_1}x_2\overline{x_3}x_4)=x_4\overline{x_3}x_2\overline{x_1}$ is sensitive on index $2$, we have $f(\pi(\overline{x_1}x_2\overline{x_3}x_4))=f(x_4\overline{x_3}x_2\overline{x_1})=g(x_1x_2x_3x_4)$ and $ \neg g(x_1x_2x_3x_4)=\neg f(x_4\overline{x_3}x_2\overline{x_1})=f(x_4x_3x_2\overline{x_1})=g(x_1x_2\overline{x_3}x_4)$. Therefore, Boolean function $g$ and input $x$ is sensitive on index $3=\pi(2)$.  

Therefore, for any $x$, it is clear that 
 $s(f,\pi((\neg)x))=s(g,x).$  
\end{proof}

\begin{theorem}
\label{Tm1}
Two PN-equivalent functions $f$ and $g$ have the same sensitivity, $0$-sensitivity and $1$-sensitivity: if $f$ is PN-equivalent to $g$, then $s(f) = s(g), s^0(f) = s^0(g)$ and $s^1(f) = s^1(g)$.
The contrapositive of this theorem is: if $s(f) \neq s(g), s^0(f) \neq s^0(g)$ or $s^1(f) \neq s^1(g)$, then $f$ is not  PN-equivalent to $g$. 
\end{theorem}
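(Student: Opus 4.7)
The plan is to derive Theorem~\ref{Tm1} as an essentially immediate consequence of Lemma~\ref{Lm1}, so the work is to lift the pointwise identity $s(f,\pi((\neg)x))=s(g,x)$ to the three max-based aggregates $s$, $s^0$, $s^1$. The key structural fact I will lean on is that the map $\varphi:x\mapsto \pi((\neg)x)$ is a bijection of $\{0,1\}^n$ onto itself: the negation step is its own inverse on each coordinate and $\pi$ is a permutation of coordinates, so $\varphi$ is a composition of bijections. This means that ranging $x$ over all of $\{0,1\}^n$ is the same as ranging $y=\varphi(x)$ over all of $\{0,1\}^n$, which is exactly what I need to turn a pointwise equality into an equality of maxima.

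First I would handle $s(f)=s(g)$. Starting from $s(f,\varphi(x))=s(g,x)$ for every $x$ and taking the maximum over $x\in\{0,1\}^n$ on both sides, the right-hand side is $s(g)$ by definition, while the left-hand side equals $\max_{y\in\{0,1\}^n} s(f,y)=s(f)$ by the bijection argument above. So $s(f)=s(g)$.

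Next I would treat $s^0$ and $s^1$ in parallel. Here I also have to track the output value, not just the sensitivity count. From the PN-equivalence hypothesis $f(\varphi(x))=g(x)$, the bijection $\varphi$ sends $\{x:g(x)=b\}$ onto $\{y:f(y)=b\}$ for $b\in\{0,1\}$. Combining this with Lemma~\ref{Lm1} gives
\begin{equation}
s^b(g)=\max_{x:\,g(x)=b} s(g,x)=\max_{x:\,g(x)=b} s(f,\varphi(x))=\max_{y:\,f(y)=b} s(f,y)=s^b(f),
\nonumber
\end{equation}
for $b=0$ and $b=1$, yielding $s^0(f)=s^0(g)$ and $s^1(f)=s^1(g)$. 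The contrapositive form stated in the theorem then follows by elementary logic.

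I don't expect any genuine obstacle: Lemma~\ref{Lm1} already does all the combinatorial heavy lifting (showing that permuting coordinates moves the sensitive indices along and that coordinate negation is invisible to sensitivity counts). The only point that needs to be stated cleanly rather than glossed over is the bijection argument used to swap the quantifier domain, together with the observation that this same bijection respects the value classes $\{f=0\}$ and $\{f=1\}$, which is what lets the argument go through for $s^0$ and $s^1$ and not just for $s$.
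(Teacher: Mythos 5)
Your proposal is correct and follows essentially the same route as the paper: both derive the result directly from Lemma~\ref{Lm1} by observing that $x\mapsto\pi((\neg)x)$ is a bijection of $\{0,1\}^n$, so the pointwise equality of local sensitivities transfers to the maxima. The only difference is that you spell out the value-class-preservation step for $s^0$ and $s^1$ explicitly, which the paper compresses into a ``similarly.''
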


\begin{proof}
According to Lemma \ref{Lm1},  we have 
$s(g) = \max\{s(g,x): x \in \{0,1\}^n\}
=\max\{s(f,\pi((\neg)x)): x \in \{0,1\}^n\}
=\max\{s(f,x): x \in \{0,1\}^n\}=s(f)$.


Similarly, we can prove that $s^0(f) = s^0(g)$ and $s^1(f) = s^1(g).$
\end{proof}

\begin{definition}
For all words $X$ in truth table $T(f)$, we denote $OSV(f)=\Big(s(f,x^{(1)}),...,s(f,x^{(N)})\Big)$ such that $s(f,x^{(1)}) \ge \cdots \ge s(f,x^{(N)})$ as the \emph{ordered sensitivity vector}\footnote{Actually, it is a \emph{multiset}. But in order to describe it more intuitively, we call it a vector.} of function $f$, where $N=|X|$ is the total number of words.
\end{definition}

\begin{example}
For a 3-input Boolean function $f$, if we have $s(f,000) = s(f,101) = 3$, $s(f,001) = s(f,011) = s(f,100) = 2$, $s(f,010) = s(f,111) = 1$, and $s(f,110) = 0$, then $OSV(f) = \{ s(f,000), s(f,101), s(f,001), s(f,011) , s(f,100), s(f,010),\\ s(f,111), s(f,110) \} = \{3, 3, 2, 2, 2, 1, 1, 0\}$.
\end{example}

Similarly, we can define $OSV^0(f)$ as \textit{ordered} $\emph{0}$-\emph{sensitivity vector} and $OSV^1(f)$ as \textit{ordered} $\emph{1}$-\emph{sensitivity vector}.

\begin{theorem}\label{Tm2}
 Two PN-equivalent functions $f$ and $g$ have the same ordered sensitivity vector, ordered 0-sensitivity vector and ordered 1-sensitivity vector: if $f$ is PN-equivalent to  $g$, then $OSV(f) = OSV(g), OSV^0(f) = OSV^0(g)$ and $OSV^1(f) = OSV^1(g)$.
The contrapositive of this theorem is: if $OSV(f) \neq OSV(g), OSV^0(f) \neq OSV^0(g)$ or $OSV^1(f) \neq OSV^1(g)$, then $f \ncong g$.
\end{theorem}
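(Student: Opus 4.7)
The plan is to reduce the claim about ordered vectors to Lemma~\ref{Lm1} by observing that the PN transformation acts as a bijection on $\{0,1\}^n$. First I would fix the permutation $\pi$ and negation pattern witnessing $f \cong g$ (under PN), and define $\tau(x) = \pi((\neg)x)$. Since both $\pi$ and bitwise negation are invertible, $\tau$ is a bijection on $\{0,1\}^n$. Lemma~\ref{Lm1} then gives $s(g,x) = s(f,\tau(x))$ for every word $x$.

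Next, because $\tau$ is a bijection, as $x$ ranges over $\{0,1\}^n$ so does $\tau(x)$. Consequently the multiset $\{s(g,x) : x \in \{0,1\}^n\}$ equals the multiset $\{s(f,\tau(x)) : x \in \{0,1\}^n\} = \{s(f,y) : y \in \{0,1\}^n\}$. Sorting both multisets in nonincreasing order produces the same tuple, which is exactly $OSV(f) = OSV(g)$.

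For $OSV^0$ and $OSV^1$, the extra ingredient is that the PN transformation preserves the output value: the defining equation $f(\tau(x)) = g(x)$ shows that $\tau$ restricts to bijections $\{x : g(x) = 0\} \to \{y : f(y) = 0\}$ and $\{x : g(x) = 1\} \to \{y : f(y) = 1\}$. Re-running the same multiset argument on each restricted domain, again using Lemma~\ref{Lm1} pointwise, gives $OSV^0(f) = OSV^0(g)$ and $OSV^1(f) = OSV^1(g)$. The contrapositive in the theorem statement is then immediate.

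There is no real obstacle here: all the analytic content sits inside Lemma~\ref{Lm1}, and the theorem is essentially a bookkeeping step that promotes that pointwise identity to an identity of sorted multisets via the bijectivity of $\tau$. The only subtlety worth stating explicitly is that $\tau$ respects the partition of the domain by output value, which is what makes the $0$- and $1$-restricted versions work.
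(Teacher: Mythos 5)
Your proof is correct and follows essentially the same route as the paper's: both invoke Lemma~\ref{Lm1} pointwise and then use the bijectivity of $x \mapsto \pi((\neg)x)$ on $\{0,1\}^n$ to conclude that the multisets of local sensitivities coincide, hence the sorted vectors agree. Your explicit remark that the transformation restricts to bijections between the $0$-sets and the $1$-sets is a detail the paper leaves implicit under ``similarly,'' but it is the same argument.
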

\begin{proof}
Since  $f$ is PN-equivalent to $g$, according to Lemma \ref{Lm1}, there exist a permutation $\pi$,  for any input $x$,  such that $s(f,\pi((\neg)x))=s(g,x)$. For $\{x^1,x^2,\cdots,x^N\}=\{00\ldots0,00\ldots1,\cdots,11\ldots1\}=\{0,1\}^n$, let $y^i=\pi((\neg)x^i)$, it is obvious that $\{y^1,y^2,\cdots,y^N\}=\{00\ldots0,00\ldots1,\cdots,11\ldots1\}=\{0,1\}^n$. It is clear that the multiset $\{s(f,\pi((\neg) x^1)),s(f,\pi((\neg) x^2)),\cdots,s(f,\pi((\neg)x^N))\}=\{s(g,x^1),s(g,x^2),\cdots,s(g,x^N)\}=\{s(f,y^1),s(f,y^2),\ldots,\linebreak[0] s(f,y^N))\}.$  Therefore,   $OSV(f) = OSV(g)$. Similarly, we can have  $OSV^0(f) = OSV^0(g)$ and $OSV^1(f) = OSV^1(g)$.
\end{proof}
According to the proof of Theorem \ref{Tm2},  we have the following Corollary:

\begin{corollary}
\label{Co1}
Two PN-equivalent functions $f$ and $g$ have the same average sensitivity: if $f$ is PN-equivalent to $g$, then $\widehat{s(f)} = \widehat{s(g)}$.
\end{corollary}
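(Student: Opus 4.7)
The plan is to derive the corollary directly from what has already been established in the proof of Theorem~\ref{Tm2}, since the average sensitivity is just a rescaled sum of the entries of $OSV(f)$. Concretely, by definition $\widehat{s(f)} = \frac{1}{2^n}\sum_{x\in\{0,1\}^n} s(f,x)$, which is the same as $\frac{1}{2^n}$ times the sum of all entries of the multiset $OSV(f)$. Since the sum of a multiset depends only on the multiset itself, not on the order in which we list its elements, $\widehat{s(f)}$ is determined by $OSV(f)$.

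First I would invoke Lemma~\ref{Lm1} to obtain a permutation $\pi$ and (possibly) negations such that $s(f,\pi((\neg)x)) = s(g,x)$ for every $x \in \{0,1\}^n$. Then, following the change of variables used in the proof of Theorem~\ref{Tm2}, I would note that as $x$ ranges over $\{0,1\}^n$, the image $y = \pi((\neg)x)$ also ranges over all of $\{0,1\}^n$ bijectively, because both $\pi$ and bitwise negation are bijections of $\{0,1\}^n$. Therefore the sums coincide:
\begin{equation}
\sum_{x\in\{0,1\}^n} s(g,x) \;=\; \sum_{x\in\{0,1\}^n} s(f,\pi((\neg)x)) \;=\; \sum_{y\in\{0,1\}^n} s(f,y).
\nonumber
\end{equation}
Dividing both sides by $2^n$ yields $\widehat{s(g)} = \widehat{s(f)}$, which is the claim.

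There is essentially no obstacle beyond bookkeeping, since the heavy lifting (the bijection argument on the multiset of local sensitivities) was already carried out for Theorem~\ref{Tm2}. The only small care point is to state explicitly that we are using the bijectivity of the composition ``permute coordinates, then flip a fixed subset of coordinates'' on $\{0,1\}^n$, which justifies re-indexing the sum. Alternatively, one could give a one-line proof by simply remarking that $\widehat{s(f)}$ is a symmetric function (the normalized sum) of the entries of $OSV(f)$, and then quoting Theorem~\ref{Tm2} to conclude $\widehat{s(f)} = \widehat{s(g)}$.
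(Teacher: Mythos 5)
Your proposal is correct and matches the paper's approach: the paper states this corollary as an immediate consequence of the proof of Theorem~\ref{Tm2} (the same bijective re-indexing of $\{0,1\}^n$ via $\pi$ and the negations from Lemma~\ref{Lm1}), and gives no further argument. Your observation that $\widehat{s(f)}$ is just the normalized sum of the multiset $OSV(f)$, hence invariant once the multisets agree, is exactly the intended one-line justification.
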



\subsection{Advanced Sensitivity Signatures}

In the previous subsection, we only proved that $s(f)$ and $OSV(f)$ are prerequisites for NPN equivalence.
Therefore, to further distinguish the unmatched Boolean functions, we also design several advanced signatures based on $s(f)$ and $OSV(f)$.

\begin{definition}
A $K$-\emph{sensitivity domain} of $f$ $^K\!SD(f)$ contains all the words from truth table $T(f)$ that satisfied $s(f,x) = K$: $^K\!SD(f) = \{x|s(f,x) = K, x \in X\}$.
Similarly, we can define $K$-$\emph{0}$-\emph{sensitivity domain} and $K$-$\emph{1}$-\emph{sensitivity domain}
as $^K\!SD^0(f)$ and $^K\!SD^1(f)$, respectively.
\end{definition}

\begin{definition}
Let $Q_n$ be the $n$-dimensional hypercube graph.
We can get an induced subgraph $^K\!SG(f)$ from $Q_n$, whose vertices are words $x$ that satisfied $s(f,x) = K$.
We call $^K\!SG(f)$ as $K$-\emph{sensitivity graph} of $f$.
Similarly, we can define $K$-$\emph{0}$-\emph{sensitivity graph} and $K$-$\emph{1}$-\emph{sensitivity graph}
as $^K\!SG^0(f)$ and $^K\!SG^1(f)$ respectively.
\end{definition}

\begin{theorem}
\label{Tm3}
 If $f$ is PN-equivalent to $g$, then for any $K$, $^K\!SG(f)$ and $^K\!SG(g)$ are isomorphic.
Similarly, $^K\!SG^0(f)$ and $^K\!SG^0(g)$, $^K\!SG^1(f)$ and $^K\!SG^1(g)$ are isomorphic.
\end{theorem}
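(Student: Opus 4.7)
The plan is to exhibit an explicit graph isomorphism coming from the PN-transformation itself. Since $f$ is PN-equivalent to $g$, there exist a permutation $\pi$ and a negation pattern $(\neg)$ with $f(\pi((\neg)x)) = g(x)$ for every $x \in \{0,1\}^n$. Define $\phi : \{0,1\}^n \to \{0,1\}^n$ by $\phi(x) = \pi((\neg)x)$; this is a bijection on the vertex set of $Q_n$, and I would use it as the candidate isomorphism for every value of $K$.

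First I would verify that $\phi$ is an automorphism of the hypercube $Q_n$. Two words are adjacent in $Q_n$ iff they differ in exactly one coordinate. Flipping a fixed set of coordinates changes the same bits in both words, so the set of positions on which they disagree is unchanged; subsequently applying $\pi$ simply relabels this singleton set of disagreeing coordinates. Hence $\phi$ preserves Hamming distance and, in particular, adjacency in $Q_n$.

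Next I would apply Lemma~\ref{Lm1}, which gives $s(f,\phi(x)) = s(g,x)$ for every $x$. Consequently $\phi$ maps $V({}^K\!SG(g)) = \{x : s(g,x) = K\}$ bijectively onto $V({}^K\!SG(f)) = \{y : s(f,y) = K\}$. Since both ${}^K\!SG(f)$ and ${}^K\!SG(g)$ are \emph{induced} subgraphs of $Q_n$, edges are inherited directly from $Q_n$; combined with the fact that $\phi$ preserves edges of $Q_n$, the restriction of $\phi$ to $V({}^K\!SG(g))$ is a graph isomorphism onto ${}^K\!SG(f)$. For the refinements ${}^K\!SG^0$ and ${}^K\!SG^1$, the defining identity $f(\phi(x)) = g(x)$ shows that $\phi$ carries $\{x : g(x) = b\}$ bijectively onto $\{y : f(y) = b\}$ for $b \in \{0,1\}$; intersecting with the sensitivity-$K$ condition (already preserved by $\phi$) yields isomorphisms from ${}^K\!SG^0(g)$ to ${}^K\!SG^0(f)$ and from ${}^K\!SG^1(g)$ to ${}^K\!SG^1(f)$.

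I do not expect serious obstacles: the argument is mostly bookkeeping around Lemma~\ref{Lm1}. The one step that deserves care is the verification that the composition of coordinate negation and coordinate permutation yields a hypercube automorphism, since the isomorphism claim completely rests on $\phi$ preserving edges of $Q_n$; once that is in hand, the induced-subgraph structure makes the remaining deductions immediate.
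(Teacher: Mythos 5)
Your proposal is correct and follows essentially the same route as the paper's own proof: the isomorphism is the NP-transformation $x \mapsto \pi((\neg)x)$ itself, Lemma~\ref{Lm1} supplies the vertex correspondence between the $K$-sensitivity domains, and adjacency is preserved because negation and permutation of coordinates preserve Hamming distance. If anything, your explicit verification that $\phi$ is a hypercube automorphism is more careful than the paper's terse assertion of the edge condition.
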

\begin{proof}
Let us recall the definition of isomorphism first.
An isomorphism of graphs $G$ and $H$ is a bijection between the vertex sets of $G$ and $H$
$$Bi:V(G)\to V(H)$$
such that any two vertices $u$ and $v$ of $G$ are adjacent in $G$ if and only if  $Bi(u)$ and $Bi(v)$ are adjacent in $H$.

Since $f$ is  PN-equivalent to $g$ and according Lemma \ref{Lm1},  there exists a permutation $\pi$, for any input $x$,  such that $s(f,\pi((\neg)x)=s(g,x)$.  

Suppose that $^K\!SD(f) = \{x|s(f,x) = K, x \in \{0,1\}^n\}$ and let $l={^K\!SD}(f) = \{x|s(f,x) = K, x \in \{0,1\}^n\}$. Assume that $^K\!SD(g)=\{x^{(1)},x^{(2)},\cdots,x^{(l)}\}$, we can get that 
$^K\!SG(f)=\{\pi((\neg)x^{(1)}), \pi((\neg)x^{(2)}),\cdots, \pi((\neg)x^{(l)})\}$ and $|^K\!SD(g)|=|^K\!SD(f)|$. It is clear that $x^{(i)}$ and $x^{(j)}$ have an edge only and only if $\pi((\neg)x^{(i)})$ and $\pi((\neg)x^{(j)})$ have an edge.  Therefore,  $^K\!SG(f)$  is isomorphic to $^K\!SG(g)$. 
\end{proof}

However, graph isomorphism has no polynomial-time exact algorithm yet, and we have to give some approximate methods to prove that the two graphs are not isomorphic.

\begin{definition}
We denote $|E(^K\!SG^1(f))|$ as the number of edges in $^K\!SG^1(f)$.
\end{definition}

According to the proof of Theorem \ref{Tm3}, we have the following Corollary:

\begin{corollary}
\label{Co2}
If $f$ is PN-equivalent to $g$, then $|E(^K\!SG^1(f))|$ = $|E(^K\!SG^1(g))|$.
\end{corollary}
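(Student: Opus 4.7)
The plan is to derive this as an immediate consequence of Theorem \ref{Tm3}, which already establishes the stronger statement that $^K\!SG^1(f)$ and $^K\!SG^1(g)$ are isomorphic whenever $f$ and $g$ are PN-equivalent. Since graph isomorphism by definition preserves adjacency, it in particular preserves the cardinality of the edge set, so equality of $|E(^K\!SG^1(f))|$ and $|E(^K\!SG^1(g))|$ drops out for free.

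Concretely, I would first invoke Theorem \ref{Tm3} to obtain a bijection $Bi: V(^K\!SG^1(f)) \to V(^K\!SG^1(g))$ such that $\{u,v\} \in E(^K\!SG^1(f))$ if and only if $\{Bi(u),Bi(v)\} \in E(^K\!SG^1(g))$. I would then note that this induces a bijection on edge sets, namely $\{u,v\} \mapsto \{Bi(u),Bi(v)\}$, which is well-defined because $Bi$ is injective, and surjective onto $E(^K\!SG^1(g))$ by the adjacency-preserving property. Counting either set via this bijection yields $|E(^K\!SG^1(f))| = |E(^K\!SG^1(g))|$.

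There is no serious obstacle here; the entire content of the corollary is packaged in Theorem \ref{Tm3}. If anything, the only thing worth being careful about is that the bijection $Bi$ constructed in the proof of Theorem \ref{Tm3} can be taken, explicitly, to be the map $x \mapsto \pi((\neg)x)$ derived from the PN-equivalence, so there is no need to appeal to any non-constructive graph-isomorphism argument. A one-line proof suffices, and its role in the paper is not to provide new mathematical content but rather to single out the edge count as a cheap, computable signature that can be compared before attempting the harder isomorphism test.
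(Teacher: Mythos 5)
Your proposal is correct and matches the paper's route exactly: the paper states Corollary~\ref{Co2} as an immediate consequence of Theorem~\ref{Tm3} (``According to the proof of Theorem~\ref{Tm3}, we have the following Corollary''), with the isomorphism given explicitly by $x \mapsto \pi((\neg)x)$, and isomorphic graphs trivially have equal edge counts. Your write-up is, if anything, more explicit than the paper's, which offers no further argument.
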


\begin{definition}
\emph{Hamming distance}
$h(x,y)$ is a metric for comparing two binary strings $x$ and $y$. It is the number of bit positions in which $x$ and $y$ are different.
We define $AH(^K\!SD^1(f))$ as the \emph{average Hamming distance} of a K-sensitivity domain:
\begin{equation}
    AH(^K\!SD^1(f)) = \frac{1}{N}\sum_{x,y\in ^K\!SD^1(f)}h(x,y). 
\nonumber
\end{equation}

\end{definition}

\begin{corollary}
\label{Co3}
If $f$ is PN-equivalent to $g$, then $AH(^K\!SD^1(f))$ = $AH(^K\!SD^1(g))$.
\end{corollary}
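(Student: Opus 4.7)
The plan is to reduce this to the Hamming distance being invariant under the transformation $x \mapsto \pi((\neg)x)$, and then transport the sum defining $AH$ through this bijection.

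First I would set up the bijection. By the hypothesis $f$ is PN-equivalent to $g$, so there is a permutation $\pi$ and a negation pattern $(\neg)$ with $f(\pi((\neg)x)) = g(x)$. Define $\phi(x) = \pi((\neg)x)$. This $\phi$ is clearly a bijection on $\{0,1\}^n$. By Lemma~\ref{Lm1}, $s(f,\phi(x)) = s(g,x)$, and by construction $f(\phi(x)) = g(x)$, so $\phi$ restricts to a bijection
\[
\phi:\ {^K\!SD^1}(g) \ \longrightarrow\ {^K\!SD^1}(f).
\]
In particular $|{^K\!SD^1}(f)| = |{^K\!SD^1}(g)|$, so whatever normalization constant $N$ appears in the definition of $AH$, it agrees on the two sides.

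The key step is to show that $\phi$ preserves Hamming distance, i.e., $h(\phi(x),\phi(y)) = h(x,y)$ for all $x,y$. For each coordinate $i$, the $i$-th bit of $\phi(x)$ is $(\neg)_{\pi(i)} x_{\pi(i)}$ (and similarly for $y$), so the $i$-th bits of $\phi(x)$ and $\phi(y)$ differ iff $x_{\pi(i)} \neq y_{\pi(i)}$, because a common negation cancels. Summing over $i$ and reindexing by $j = \pi(i)$ gives $h(\phi(x),\phi(y)) = |\{j : x_j \neq y_j\}| = h(x,y)$. This is the only real content of the argument; everything else is bookkeeping.

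Finally I would combine the two ingredients. Writing out the definition and performing the change of variables $x' = \phi(x), y' = \phi(y)$ yields
\[
\sum_{x',y' \in {^K\!SD^1}(f)} h(x',y') \;=\; \sum_{x,y \in {^K\!SD^1}(g)} h(\phi(x),\phi(y)) \;=\; \sum_{x,y \in {^K\!SD^1}(g)} h(x,y),
\]
and dividing by the common $N$ gives $AH({^K\!SD^1}(f)) = AH({^K\!SD^1}(g))$, as required.

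The main obstacle, such as it is, is the coordinate-by-coordinate verification that negation and permutation both preserve Hamming distance; once that invariance is established, the corollary follows immediately by transporting the defining sum through the bijection already supplied by Lemma~\ref{Lm1}.
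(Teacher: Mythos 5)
Your proof is correct and follows essentially the same route as the paper's: the paper's (much terser) argument likewise rests on the observation that $h(x,y)=h(\pi((\neg)x),\pi((\neg)y))$ together with Lemma~\ref{Lm1} and the definition of $AH$. You have simply filled in the details the paper leaves implicit, namely that $\phi(x)=\pi((\neg)x)$ restricts to a bijection between ${^K\!SD^1}(g)$ and ${^K\!SD^1}(f)$ and that the defining sum transports through it.
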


\begin{proof}
It is easy to see that $h(x,y)=h(\pi((\neg)x,(\neg)y))$. 
According to Lemma \ref{Lm1} and the definition,  therefore the corollary holds. 
\end{proof}

Overall, we can determine in advance that two Boolean functions are not NPN-equivalent in Boolean matching through \textbf{Theorem~\ref{Tm1}} (sensitivities $s$), \textbf{Theorem~\ref{Tm2}} (ordered sensitivity vectors $OSV$), \textbf{Theorem ~\ref{Tm3}} (isomorphism of $K$-sensitivitiy graphs $^K\!SG$), \textbf{Corollary~\ref{Co1}} (average sensitivities $\widehat{s}$), \textbf{Corollary~\ref{Co2}} (edge counts of $K$-sensitivity graphs $|E(^K\!SG^1)|$), and \textbf{Corollary~\ref{Co3}} (average Hamming distances $AH(^K\!SD^1)$).


Fig.~\ref{fig:sensitivity_view} shows the results of several sensitivity-based signatures of 3-input Boolean functions.
3-input Boolean functions fall into 14 different NPN classes.
From this figure, we can see that $OSV(f)$ of different NPN classes are totally different. 
Fig.~\ref{fig:sensitivity4} and Fig.~\ref{fig:sensitivity5}'s $OSV^1$ are the same, but their $AH(^K\!SD^1)$ are different.
That is to say, we can completely distinguish all NPN classes by signature OSV.

\subsection{Symmetry, Cofactor Signatures vs. Sensitivity Signatures}
\label{sec:compare}

Let $f$ be an $n$-bit Boolean function: $\{0,1\}^n\to \{0,1\}$, two variables $x_i$ and $x_j$ are symmetric if and only if $f(...,x_i,...,x_j,...) = f(...,x_j,...,x_i,...)$ \cite{huang2013fast}.  Let $g$ be an $n$-bit Boolean function: $\{0,1\}^n\to \{0,1\}$ and $g$ is NPN-equivalent to $f$. Without loss of generality, assuming that  $f(\pi(x))=g(x)$, if $x_i$ and $x_i$ are symmetric in function $g$, then we have $g(...,x_i,...,x_j,...) = g(...,x_j,...,x_i,...)$.  Since  $f(\pi(x))=g(x)$, we can get that $f(...,\pi(x_i),...,\pi(x_j),...)=f(...,\pi(x_j),...,\pi(x_i),...)$. Therefore,  $\pi(x_i)$ and $\pi(x_j)$ are symmetric in function $f$. It is clear that symmetric group structure including the number of groups and the size of each group are the same if two functions $f$ and $g$ are NPN-equivalent. 

Zhang et al.~\cite{zhang2019efficient} considered structural cofactor signature of Boolean functions. In their paper, they defined a cofactor as $f_i(f_{\overline{x_i}})$, which can be seen as a face of the hypercube $Q_n$ that represent $f$. 
Many Boolean matching works~\cite{huang2013fast, zhang2019efficient, chai2006building, abdollahi2008symmetry} focused on face properties of the hypercube $Q_n$.  We investigate sensitivity of Boolean functions in this paper, which is the point structure of the hypercube $Q_n$, and focus on connections between points of value 0 and value 1. The method used symmetry and this paper are mutually complementary. 

Since  hypercube $Q_n$ has $2n$ faces and $2^n$ points, 
instinctively, there are $2n$ items of information when one uses structural cofactor signatures and symmetric and there are $2^n$ items of information when one uses sensitivity signatures. Therefore,   sensitivity signatures are expected to be more efficient.  However, the time complexity to compute sensitivity signatures is not more than to compute cofactor signatures of a Boolean function. They both need $O(2^n)$.  

\section{Methodology}\label{sec:method}
This section shows how to use sensitivity-based signatures described in the previous section to reduce the search space as much as possible, which speeds up NPN equivalence checking.

\begin{algorithm}[tbp]
\caption{Fast Sensitivity Computation}
\label{algorithm:calculation}
\begin{algorithmic}[1]
\Require Truth table $T(f)$ of an $n$-variable Boolean function $f$
\Ensure $s(f)$, $\widehat{s(f)}$, $OSV(f)$

\Comment{\emph{with a compression factor of $z=32$ or $64$}}
\State $len \gets max(2^n / z, 1)$
\State $T_c[0:len] \gets \mathit{compress}(T(f))$
\State $\mathit{sum\_sensitivity} \gets 0$
\State $d \gets -z$
\For{$i = 0$ to $len$}
    \State $v \gets T_c[i]$
    \State $d \gets d + z$
    \While{$v \ne 0$}
        \State $ss \gets 0$
        \State $flip\_lowest\_one(v)$
        \For{$index$ in $n$}
            \State $flip(T_{index})$
            \If{$check\_output\_flip()$}
                \State $ss \gets ss + 1$
            \EndIf
        \EndFor
        \State $s(f) \gets \max(s(f),\;ss)$
        \State Update $OSV(f)$, $\sum(s(f))$, $ss$ and $d$
    \EndWhile
\EndFor
\State $\widehat{s(f)} = getAve(\sum{s(f)})$
\State $OSV(f) = OSV.sort()$
\State \Return $s(f)$, $\widehat{s(f)}$, $OSV(f)$
\end{algorithmic}
\end{algorithm}

\subsection{Fast Sensitivity Computation}
As the property of Boolean sensitivity as mentioned above, we find that it is very convenient for us to implement the code based on binary string. 
The length of the truth table we defined as \textit{len}, the inputs variables number is as the defined \textit{n} and it meet the equation of $ len = 2^n $. 
So if we want to do the three types of transformations (negate inputs, permute inputs and negate outputs), it could be completed in $O(len \cdot k \cdot n)$ time through the bit operation on string and k is the number of the flipping position in an unsigned integer. 
Therefore, we could also perform some string-related optimization, like bits compress, to reduce the processing time of sensitivity computation.
If we use BDD to represent a Boolean function, there is no such advantage.

\emph{Algorithm~\ref{algorithm:calculation}} presents an efficient procedure to compute $s(f)$.
We can compute $s^0(f)$ and $s^1(f)$ similarly.
The algorithm takes truth table $T(f)$ of a Boolean function as input.
First, it compresses the truth table with a compression factor $z$~(usually we set $z$ to 32 or 64) and initializes the sensitivity as well as the counter~(Line 1-4).
Then, for each item in compressed truth table $T_c$, the procedure flip the item~(Line 10-12), check the output~(Line 13), and get the temporary sensitivity~(Line 14-17).
Next it updates $OSV(f)$, the sum of sensitivity $\sum{f(x)}$, temporary sensitivity and counter~(Line 18).
At last, we get $\widehat{s(f)}$ by $\sum{s(f)}$ and ordered $OSV(f)$.

\begin{algorithm}[tbp]
\caption{Basic Sensitivity Signatures Pruning}
\label{algorithm:basicpruning}
\begin{algorithmic}[1]
\Require Truth tables $T(\cdot)$ of $n$-variable Boolean functions $f$ and $g$
\Ensure  False (when $f \ncong g$) or Unknown

\State Compute $s(\cdot)$, $\widehat{s}(\cdot)$, and $OSV(\cdot)$ of $f$ and $g$ using Algorithm~\ref{algorithm:calculation}
\If{$s(f) \neq s(g)$}
    \State \Return False
\ElsIf{$\widehat{s(f)} \neq \widehat{s(g)}$ or $OSV(f) \neq OSV(g)$}
    \State \Return False
\EndIf
\State \Return Unknown
\end{algorithmic}
\end{algorithm}

\begin{algorithm}[tbp]
\caption{Advanced Sensitivity Signatures Pruning}
\label{algorithm:advancedpruning}
\begin{algorithmic}[1]
\Require Two ordered sensitivity vector $OSV(f)$ and $OSV(g)$ of two Boolean functions $f$ and $g$, maximum iteration $maxIter$
\Ensure  False (when $f \ncong g$) or Unknown

\State Get maximum local sensitivity K
\While{$i < maxIter$}
　　\If{$|E(^K\!SG(f))| \neq |E(^K\!SG(g))|$}
        \State \Return False
    \ElsIf{$AH(^K\!SD(f)) \neq AH(^K\!SD(g))$}
        \State \Return False
    \EndIf
    \State Get next $K$
    \State $i++$
\EndWhile
\State \Return Unknown
\end{algorithmic}
\end{algorithm}

We will give an example.
Assume a 5-input Boolean function $g$ has a truth table ``11000100000101100011101100010110'', which has $2^5{=}32$ bits with $g(00000)$ at the leftmost bit.
This truth table can be implemented efficiently by packing multiple bits in an entry. Assume we pack every 8 bits in an entry, the truth table is compressed into a 4-entry array [``11000100", ``00010110", ``00111011", ``00010110"] = [196, 22, 59, 22]. Moreover, we can perform the transformations on the compressed entries more efficiently than on a single bit.
In practice, we can compress a truth table of $2^n$ bits into an uint32 array of length $2^n/32$ and attain an about $5\times$ speedup than the normal sensitivity calculation method.

\subsection{Sensitivity Pruning}
We use the sensitivity properties proved in Section~\ref{sec:prove} to derive the sensitivity signatures pruning.
Algorithm~\ref{algorithm:basicpruning} shows the pruning process based on basic sensitivity signatures.
The algorithm takes truth tables of two Boolean functions $f$ and $g$ as inputs.
The program first calculates sensitivity using Algorithm~\ref{algorithm:calculation} and compare the sensitivity of the two functions.
If $s(f) \neq s(g)$, then the procedure returns $f \ncong g$.
Otherwise, it gets average sensitivity and ordered sensitivity vector for comparison.
The procedure will return $f \ncong g$ if these two signatures are not equal.
If all these three signatures are equal, NPN equivalence will be tested by the follow-up signatures.
This algorithm is suitable for sensitivity, 0-sensitivity and 1-sensitivity.

In Figure~\ref{fig:sensitivity_view}, we can see that all 3-input NPN canonical forms could be constructed via $OSV$.
However, if we use $OSV^1$ as basic signatures, class~\ref{fig:sensitivity4} and class~\ref{fig:sensitivity5} could not be tested.
As said before, to further distinguish the unmatched Boolean functions, we also design advanced signatures based on ordered sensitivity vectors.

Algorithm~\ref{algorithm:advancedpruning} gives the advanced sensitivity signatures pruning method.
The program takes two ordered sensitivity vectors $OSV(f)$ and $OSV(g)$ of two Boolean functions $f$ and $g$ as well as the maximum iteration as inputs.
The maximum iteration is less than the number of elements with different values in the ordered sensitivity vector.
First, it gets the maximum local sensitivity $K$.
Obviously, it is the element at top of the vector.
Then we compare $|E(^K\!SD)|$ and $AH(^K\!SD)$ of the two Boolean functions one by one.
The procedure will return $f \ncong g$ if any of these two signatures are not equal.
Otherwise, the algorithm will get the next $K$~(the next small local sensitivity) and repeat Line 3-9 until maximum iteration reaches.

For example, if we only use 1-sensitivity to test the NPN equivalence of class~\ref{fig:sensitivity4} and class~\ref{fig:sensitivity5}, we can not get that class~\ref{fig:sensitivity4} and class~\ref{fig:sensitivity5} are not equivalent.
The $|E(^K\!SD)|$ of these two NPN classes are both equal to 0.
But we can know that these are two NPN classes because their $AH(^3\!SD)$ are not the same.

\subsection{Integration to Canonical Form Method}
The above pruning method can only quickly determine that two Boolean functions $f$ and $g$ belong to different NPN classes.
However, sensitivity properties are only prerequisites of NPN equivalence.
We can use these properties to efficiently determine the non-equivalence of Boolean functions but cannot get NPN-equivalent classes.
Therefore we adopt a fast canonical form-based method~\cite{huang2013fast} to complete the follow-up to test two Boolean functions that are NPN equivalent.
Please refer to this article~\cite{huang2013fast} for details due to space limitations.

\subsection{Overall Algorithm}
Algorithm~\ref{algorithm:overall} depicts our overall Boolean matching procedure, which is divided into four phases.
The first three phases are the pruning stages to test sensitivity signatures and reject non-NPN-equivalent functions.
The last phase verifies NPN equivalence using the canonical form.
In the pruning stages, once $f$ and $g$ fail any sensitivity signature test, the procedure returns false.

In the first phase, a well-known signature used in our matching procedure is the number of onset minterms.
Many literatures use this quantity as a first-order signature to determine the canonical form of Boolean functions~\cite{huang2013fast, chai2006building, abdollahi2008symmetry}.

For the output polarity assignment of a given function $f$, we consider both $|f|$ and $|\overline{f}|$. 
If $|f| < |\overline{f}|$, then we first apply the 1-\textit{sensitivity} remaining pruning algorithm to $|\overline{f}|$, and else we use 0-\textit{sensitivity}.
The reason why we first use 1-\textit{sensitivity} or 0-\textit{sensitivity} is that such a program can reduce the time to calculate the sensitivity, thereby speeding up the matching process.
If we can not test that two Boolean functions are not NPN equivalent only by 0-sensitivity and 1-sensitivity, we will apply sensitivity for further testing.

Assuming that we cannot test whether $f$ and $g$ are not equivalent after phase 2, we can apply phase 3 for further testing.
However, this phase is time-consuming, so we set it optional.
At last, we will apply a traditional symmetry-based canonical form method to make sure that $f$ and $g$ are NPN equivalent.

\begin{algorithm}[tb]
\caption{Overall Boolean Matching Algorithm}
\label{algorithm:overall}
\begin{algorithmic}[1]
\Require Truth tables $T(\cdot)$ of $n$-variable Boolean functions $f$ and $g$
\Ensure True (when $f \cong g$) or False (when $f \ncong g$)

\Comment{\emph{phase 1: prune by minterm signature}}
\If{$|f| \neq |g|$ and $|f| \neq |\overline{g}|$}
    \State \Return False
\EndIf

\Comment{\emph{phase 2: prune by basic sensitivity signatures}}
\If{$|f| < |\overline{f}|$}
    \State Prune as Algorithm~\ref{algorithm:basicpruning} using the 1-sensitivities
\Else
    \State Prune as Algorithm~\ref{algorithm:basicpruning} using the 0-sensitivities
\EndIf
\State Prune as Algorithm~\ref{algorithm:basicpruning} using the sensitivities

\Comment{\emph{phase 3 (optional): prune by advanced sensitivity signatures}}
\If{$|f| < |\overline{f}|$}
    \State Prune as Algorithm~\ref{algorithm:advancedpruning} using the 1-sensitivities
\Else
    \State Prune as Algorithm~\ref{algorithm:advancedpruning} using the 0-sensitivities
\EndIf
\State Prune as Algorithm~\ref{algorithm:advancedpruning} using the sensitivities

\Comment{\emph{phase 4: construct canonical form}}
\If{$\mathit{canonical}(f) = \mathit{canonical}(g)$}
    \State \Return True
\Else
    \State \Return False
\EndIf
\end{algorithmic}
\end{algorithm}

\section{Evaluation}\label{sec:evaluation}

\subsection{Environmental Setup}
We implement a sensitivity pruning algorithm in C++ and reimplement a fast symmetry-based fast Boolean matching method~\cite{huang2013fast} as the phase 4 in Algorithm~\ref{algorithm:overall}.
The whole procedure runs on an Intel Xeon 2-CPU 10-core computer with 60GB RAM. 
We generate Boolean functions of different bits to test the algorithm.
The truth tables of these Boolean functions are provided in a text file, one per line, which lists them one after another without separators.

\subsection{Boolean Function Generation}
We generated two groups of $n$-variable Boolean functions.
Considering the running time, the number of generated Boolean functions will gradually decrease when $n$ becomes larger. 
The first group is completely randomly generated, denoted as $Group\;1$.
However, there are a huge amount of NPN classes when $n$ increases and it is difficult for randomly generated examples to have NPN classes.
In practical applications, there will be a small number of NPN classes.
For example, the first step of technology mapping is to compute the canonical forms of the library cell functions in advance.
In the technology mapping step, the procedure will check the logic function of the subgraph in the subject graph is NPN equivalent to these canonical forms.
The number of library cell functions will not be large, so the number of NPN classes is also limited.

We generated another group of Boolean functions with about 100 NPN classes.
We directly use the nature of NPN equivalence and randomly adopt input flip, output flip and randomly input swap for all words of a truth table.
We randomly pick a certain number of functions from the $Group\;1$, and apply multiple NPN transformations for each function to get some NPN equivalent Boolean functions.
Then we can get another group of Boolean functions with NPN equivalent ones, denoted as $Group\;2$.

\subsection{Experimental Results}
We test the NPN matching procedure on both $Group\;1$ and $Group\;2$.
The compression factor is set to 32 and the maximum iteration is set to 3.
Table~\ref{tab:random} shows the effect of sensitivity signatures in reducing search space.
We adopt the concept of collision in hash.
We say that there is a collision if two Boolean functions $f$ and $g$ can not be determined to be mismatched after one pruning phase.
The columns “\#Coll. a. P2”, “\#Coll. a. P3” list the number of collisions after pruning phase 2 and phase 3 in Algorithm~\ref{algorithm:overall}.
The columns “\#Coll. a. Sym”, “\#Coll. a. H-Sym” list the number of collisions after pruning using symmetry and high-order symmetry in~\cite{huang2013fast}.
For each $n$, we select a certain amount of Boolean functions pairs from $Group\;1$, and apply Boolean matching.
Without loss of generality, we do Boolean matching 10 times and take the average.
The results show that sensitivity signatures could prune more mismatched Boolean functions than symmetry signatures.
Especially for large bits, only $OSV$ can prune most of the mismatched Boolean functions.

\begin{table}[tbhp]
\caption{Collisions of $n$-variables Boolean functions matching using $Group\;1$}
\label{tab:random}
\centering
\small
\begin{tabular}{IcIcIc|cIc|cI}
\shline
N   & \#Matching & \makecell[c]{\#Coll. \\a. P2}  & \makecell[c]{\#Coll. \\a. P3} & \makecell[c]{\#Coll. \\a. Sym} & \makecell[c]{\#Coll. \\a. H-Sym}\\
\shline
5   & 1M    & 1913  & 21    & 124063 & 70379\\
6   & 1M    & 98    & 0     & 62659  & 8049\\
7   & 1M    & 4     & 0     & 31307  & 2644\\
8   & 1M    & 0     & 0     & 15686  & 1589\\
\shline
9   & 100k  & 0     & 0     & 845    & 42\\
10  & 100k  & 0     & 0     & 382    & 22\\
11  & 100k  & 0     & 0     & 171    & 6\\
12  & 100k  & 0     & 0     & 115    & 1\\
\shline
13  & 10k   & 0     & 0     & 5      & 0\\
14  & 10k   & 0     & 0     & 3      & 1\\
15  & 10k   & 0     & 0     & 2      & 0\\
16  & 10k   & 0     & 0     & 3      & 0\\
\shline
\end{tabular}
\end{table}

Table~\ref{tab:npnmatch} gives the runtime of the proposed Boolean matching method in Algorithm~\ref{algorithm:overall} using $Group\;2$.
The column “$t_2$” is the runtime of Phase 1 and Phase 2 while the column “$t_3$” is the runtime of Phase 3.
Table~\ref{tab:random} shows that only $OSV$ can prune most of the mismatched Boolean functions, so we could omit this Phase 3 to reduce runtime.
The column “$t_{total}$ W/O P3” lists the runtime of Algorithm~\ref{algorithm:overall} without Phase 3~(Phase 1 + Phase 2 + Phase 4).
The column “$t_{total}$ W P3” lists the runtime of Algorithm~\ref{algorithm:overall} with Phase 3~(Phase 1 + Phase 2 + Phase 3 + Phase 4).
The column “$t_{base}$” is the runtime of~\cite{huang2013fast}, without integrating our sensitivity signatures pruning.
For each $n$, we also select a certain amount of Boolean functions pairs from $Group\;2$, and apply Boolean matching.
It is worth noting that we will ensure that about 15\% of the Boolean function pairs in these matches are NPN equivalent to test the effectiveness of the algorithm in practical applications.
Without loss of generality, we also do Boolean matching 10 times and take the average.
From Table~\ref{tab:npnmatch}, we can see that after integrating our sensitivity signatures pruning, the Boolean matching performs up to 3.85x speedup compared with previous work.

\begin{table*}[tbhp]
\caption{Runtime of $n$-variables Boolean functions matching using $Group\;2$}
\label{tab:npnmatch}
\centering
\small
\begin{tabular}{IcIcIc|cIc|cIcIc|cI}
\shline
N   & \#Matching & \makecell[c]{$t_2$\\(ms)}  & \makecell[c]{$t_3$\\(ms)} &\makecell[c]{$t_{total}$ W/O P3\\(ms)}& \makecell[c]{$t_{total}$ W P3 \\(ms)} & \makecell[c]{$t_{base}$\\(ms)}~\cite{huang2013fast} & \makecell[c]{Speedup \\W/O P3} & \makecell[c]{Speedup \\W P3}\\
\shline
5   & 1M    & 24.93 &	88.24	&  241.62  &	263.66  &	675.05	 & 2.79 & 2.56  \\
6   & 1M    & 25.15 &	15.23	&  132.76  &	150.51  &	303.54   & 2.28 & 2.02  \\
7   & 1M    & 25.53 &	34.64	&  305.16  &	377.29  &	1174.09  & 3.85 & 3.11  \\
8   & 1M    & 20.08 &	27.69	&  1130.19 &	1165.70 &	1257.48  & 1.11 & 1.08  \\
\shline
9   & 100k  & 3.73  &	3.06	&  320.64  &	322.80  &	332.31   & 1.04 & 1.03  \\
10  & 100k  & 2.62  &	2.63	&  538.3   &	539.45  &	658.17   & 1.22 & 1.22  \\
11  & 100k  & 3.44  &	2.75	&  1133.67 &	1134.51 &	1319.14  & 1.16 & 1.16  \\
12  & 100k  & 3.58  &	2.96	&  2333.34 &	2334.14 &	2729.07  & 1.17 & 1.17  \\
\shline
13  & 10k   & 0.54  &	0.17	&  457.9   &	457.94  &	534.16	 & 1.16 & 1.17  \\
14  & 10k   & 0.49  &	0.39	&  939.19  &	939.24  &	1085.97  & 1.15 & 1.16  \\
15  & 10k   & 1.11  &	0.82	&  1385.85 &	1835.91 &	2155.93  & 1.17 & 1.17  \\
16  & 10k   & 2.96  &	2.1	    &  3710.0  &	3710.51 &   4338.54	 & 1.17 & 1.17  \\
\shline
\end{tabular}
\end{table*}

\subsection{Discussion}

From the Table~\ref{tab:random} and Table~\ref{tab:npnmatch}, we can see that sensitivity signatures show great power in Boolean matching.
Although we can only prove that equal sensitivity signatures are the prerequisites of NPN equivalent instead of a necessary and sufficient condition, this is enough to support fast pruning in Boolean matching.
It can help us quickly prune out the mismatched Boolean functions to reduce the runtime.
For small-size Boolean functions, it is more effective than symmetry signatures.
As for large size functions, symmetry also shows good pruning ability due to the limited total matching times and huge amount NPN classes.

For small-size Boolean functions, the proposed matching algorithm gains a better speedup due to sensitivity signatures that will prune more mismatched Boolean functions earlier.
Among 5 to 8 bits, functions with $7$-variables get the maximum speedup, because almost all mismatched Boolean functions pairs were pruned using sensitivity signatures, but there still a lot could not be detected by symmetry signatures. 
When $n$ increases, symmetry signatures can also prune most of the mismatched functions, so the speedup of our algorithm is relatively small.
There is still a little advantage because we use the fast sensitivity computation of sensitivity described in Algorithm~\ref{algorithm:calculation}, which has less computational complexity than symmetry.

\section{Related Work}\label{sec:related}
The core of Boolean matching is to check whether two Boolean functions belong to the same equivalence class (e.g., NPN).
Except for the group algebraic approach~\cite{slepian1953number}, many mature algorithms have been explored in recent years.
These works can be classified into four categories: 1) canonical form-based algorithms, 2) algorithms using Boolean signatures, 3) spectral analysis methods, and 4) SAT-based methods. 
Because SAT-based methods have little relevance to our work, we only focus on the other three methods.

Canonical form-based matching methods compute some complete and unique~(canonical) forms of the Boolean functions. The idea is that two functions match if and only if their canonical forms are identical.
Burch and Long~\cite{burch1992efficient} introduce a canonical form for matching under input negation and a semi-canonical form for matching under input permutation.
Debnath and Sasao~\cite{debnath2004efficient} introduce a canonical form for solving the general Boolean matching problem. 
Lee {\it et al.}~\cite{lee2018canonicalization} devise a procedure to canonicalize a threshold logic function and check the
equivalence of two threshold logic functions by their canonicalized linear inequalities. 
Huang {\it et al.}~\cite{huang2013fast} detect 
symmetry and higher-order symmetry to construct canonical forms.
The power of this kind of method is best manifested in the technology mapping.

A signature of a Boolean function is a compact representation that exploits some properties from the function.
Zhang et al.~\cite{zhang2019efficient} reduce the search space and improve the matching performance by means of structural signatures, variable symmetry, phase collision check, and variable grouping.
Abdollahi and Pedram~\cite{abdollahi2008symmetry} propose new canonical forms based on signatures.

Spectral analysis methods usually transform Boolean functions into spectral representations.
These spectral representations can also be regarded as signatures.
Moore et al.~\cite{moore2006boolean} and Thornton et al.~\cite{thornton2002logic} use Walsh spectra and Haar spectra to finish the Boolean matching and check the equivalence, respectively.
Spectral analysis methods are usually less practical than other approaches due to the exponential size of the spectra.

All previous works did not consider the sensitivity properties of Boolean functions to develop a fast Boolean matching method.
In this paper, we cooperate sensitivity with canonical form-based methods to complete the Boolean matching. 
We firmly believe that sensitivity is a very important property of Boolean functions and can be combined with other methods to find better Boolean matching algorithms. 
We will further explore these in follow-up works.
In the future, we will apply this method to practical applications to evaluate its performance.
And we will explore more sensitivity properties, such as block sensitivity, and try to propose a new canonical form in Boolean matching based on sensitivities.


\section{Conclusion}\label{sec:conclusion}
This paper introduced Boolean sensitivity as a new series of signatures into Boolean matching and proposed a fast matching algorithm based on sensitivity signatures pruning.
We proved that these sensitivity signatures are equal, which are the prerequisites for the NPN equivalence.
We also developed a fast sensitivity calculation method to compute and compare these signatures.
Sensitivity signatures could be easily integrated into traditional methods and distinguish the mismatched Boolean functions faster.
The experimental results show that sensitivity-related signatures we proposed in this paper can reduce the search space to a very large extent, and perform up to 3x speedup over the state-of-the-art Boolean matching methods.

\section*{Acknowledgment}
This work is partly supported by Zhejiang Provincial Key R\&D program under Grant No.~2020C01052, Beijing Municipal Science and Technology Program under Grant No.~Z201100004220007, National Natural Science Foundation of China (NSFC) under Grant No.~62090021, and Beijing Academy of Artificial Intelligence (BAAI).


\newpage
\bibliographystyle{IEEEtran}
\bibliography{IEEEabrv, npn}

\end{CJK*}
\end{document}